\setlist{topsep=0pt, leftmargin=*}
\newmdenv[shadow=true,shadowcolor=black,font=\sffamily,rightmargin=3pt]{shadedbox}
\newtheorem{theorem}{Theorem}
\newtheorem{corollary}{Corollary} 
\newtheorem{example}{Example}
\newtheorem{lemma}{Lemma}
\newtheorem{definition}{Definition}
\title{Game-Theoretic Models of Moral and Other-Regarding Agents}
\author{Gabriel Istrate
\institute{West University of Timi\c{s}oara}
\email{gabrielistrate@acm.org}
}
\begin{document}
\maketitle

\begin{abstract}
We investigate Kantian equilibria in finite normal form games, a class of non-Nashian, morally motivated courses of action that was recently proposed in the economics literature. We highlight a number of problems with such equilibria, including computational intractability, a high price of miscoordination, and problematic extension to general normal form games. We give such a generalization based on  concept of \emph{program equilibria}, and point out that that a practically relevant generalization may not exist.  To remedy this we propose some general, intuitive, computationally tractable, other-regarding equilibria that are special cases Kantian equilibria, as well as a class of courses of action that interpolates between purely self-regarding and Kantian behavior. 
\end{abstract}

\section{Introduction}

Game Theory is widely regarded as the main conceptual foundation of strategic behavior. The promise behind its explosive development (at the crossroads of Economics and Computer Science) is that of understanding the dynamics of human agents and societies and, equally importantly, of guiding the engineering of artificial agents, ultimately capable of realistic, human-like, courses of action.  Yet, it is clear that the main models of Game Theory, primarily based on the self-interested, rational actor model, and exemplified by the concept of Nash equilibria, are not realistic representations of the richness of human interactions. Concepts such as \emph{bounded rationality} \cite{simon1997models}, and the limitations they impose on the computational complexity of agents' cognitive models \cite{van2019cognition}  can certainly account for some of this difference. But this is hardly the only possible explanation: People behave differently from ideal economic agents not because they would be irrational \cite{ariely2008predictably}, but since many human interactions are cooperative, rather than competitive \cite{tomasello2009we}, guided by social norms such as \emph{reciprocity, fairness} and \emph{inequity-aversion} \cite{bowles2013cooperative},  often involving \emph{networked minds}, rather than utility maximization performed in isolation  \cite{gintis2016individuality}, driven by moral considerations  \cite{tomasello2016natural} or by other  not purely self-regarding behaviors, e.g. \emph{altruism} \cite{hoefer2013altruism} and \emph{spite} \cite{chen2008altruism,chen2016auction}.

Moral considerations (should) interact substantially with game theory: indeed, the latter field has been used to propose a  reconstruction of moral philosophy \cite{binmore1994game,binmore1994game2,binmore2005natural}; conversely, some philosophers have gone as far as to claim that we need a \emph{moral equilibrium theory} \cite{talbott1998we}. Whether that's true or not,  it is a fact that \emph{homo economicus}, the Nash optimizer of economics, is increasingly complemented by a rich emerging typology of human behavior \cite{gintis2014typology}, that also contains (in Gintis's words) "\textit{homo socialis},  the other-regarding agent who cares about fairness, reciprocity, and the well-being of others, and \textit{homo moralis} \footnote{since our agents are not necessarily human, we will use alternate names such as "moral agent" for this type of behavior.}  ... the Aristotelian bearer of nonconsequentialist character virtues".\footnote{Gintis proposes a taxonomy of behavior with three distinct types of preferences: \emph{self-regarding}, \emph{other regarding} and \emph{universalist}; a further relevant  distinction is between so-called \emph{private} and \emph{public personas}, that leads to further types of behavior such as \textit{homo Parochialis}, \textit{homo Universalis} and \textit{homo Vertus}. See \cite{gintis2014typology} for further details.} 
These claims are well-documented experimentally: for instance, 
Fischbacher et al. \cite{fischbacher2001people} investigated the percent of people having self-regarding preferences in a public goods game, showing that it is in the range of 30-40\%, while the remaining were either other-regarding  or moral agents.  Since artificial agents (will) interact with humans, such concerns are highly relevant to the design of multiagent systems and justify the study of alternative other-regarding notions, e.g. Rong and Halpern's \cite{halpern2010cooperative,rong2013towards} "cooperative equilibria" or \emph{dependency theory} \cite{sichman2002multi,grossi2012dependence}. 
 Other-regarding considerations could be encoded (e.g. \cite{fehr1999theory}) as \emph{externalities} into agents' perceived utilities, that may lead them away from straightforward maximization of their material payoffs.  However, keeping them explicit may be important for agent implementations. 

The purpose of this paper is to contribute to the emerging literature on non-Nashian,  morally inspired game theoretic concepts and, equally important, to bring its concerns and methods to the attention of the various interested communities. 
We are inspired by what we believe is one of the most intriguing classes of equilibrium concepts that can be seen as morally grounded: \emph{Kantian (a.k.a. Hofstadter) equilibria} \cite{roemer2019we}.  This notion emerged from three separate lines of research converging on an identical mathematical definition, but justifying it, however, from several very different perspectives: \emph{superrationality} \cite{superrational,fourny2020perfect}, \emph{team reasoning} \cite{bacharach1999interactive}, and \emph{Kantian optimization}, respectively \cite{roemer2019we}. 

The common framework (crisply developed for symmetric coordination games) only considers as relevant the action profiles where all agents choose \emph{the same action}, choosing the action $x$ that, if played by everyone, maximizes agents' (identical) utility functions. The justification of this restriction depends on the perspective: \emph{superrationality} assumes that if rationality constrains an agent to choose a specific course of action $x$, then the same reasoning compels \textbf{all} agents (at least in the case of symmetric games, when all agents are positionally indistinguishable from the original agent) to also choose $x$.\footnote{To cite Hofstadter: "If reasoning dictates an answer, then everyone should independently come to that answer. Seeing this fact is itself the critical step in the reasoning toward the correct answer [...]". Though superrationality does away with the assumption of counterfactual independence of Nash equilibria, it is otherwise compatible with a particular version of homo economicus that requires some very strong assumptions on agent rationality (see \cite{fourny2020perfect} for a discussion).}
In contrast, \emph{Kantian optimization} justifies the limitation to symmetric profiles in a very different manner: Roemer \cite{roemer2010kantian} suggested that agents often ignore the potential for action of the other players, acting instead according to the \emph{Kantian categorical imperative}  \cite{sedgwick2008kant} "act only according to that maxim whereby you can, at the same time, will that it should become a universal law", that is, choose a course of action that, if adopted by every agent, would bring all agents the highest payoff.  \footnote{As recognized by Roemer himself and discussed e.g. in \cite{braham2020kantian}, the connection of Kantian equilibria to actual Kantian ideas is quite loose. Another possible interpretation is that Kantian equilibria embody \textit{rule utilitarianism} \cite{harsanyi1977rule}. Finally, see \cite{sher2020normative} for a discussion of the normative aspects of Kantian equilibria. }  One way to formalize this idea, employed e.g. in \cite{alger2013homo}, is to decouple the \textit{material payoffs} agents receive from their (perceived) utility, which agents maximize in order to select the action. Specifically, assume the given agent $i$ plays strategy $x$ against action profile $y$. We assume that the material payoff the agent receives is $\pi_{i}(x,y)$. On the other hand the utility the agent uses to evaluate alternative $x$ may not be equal to $\pi_{i}(x,y)$ and may in fact, have in fact nothing to do with $y$ at all! Instead, $
u_{i}(x,\textbf{y})=\pi_{i}(x,\overline{x}_{-i}),$ 
where $\overline{x}_{-i}$ is the action profile where all agents other than $i$ play  $x$ as well. That is, the agent evaluates the desirability of action $x$ in isolation from the actions of the other players, as if  choosing $x$ could somehow "magically" determine the other players to adopt the same strategy. \footnote{Frank \cite{frank2010price} refers to this as \emph{voodoo causation}. Elster \cite{elster2017seeing} argues that Kantian optimization seems to be rooted in a form of \emph{magical thinking}, "causing agents to act on the belief (or act as if they believed) that they can have a causal influence on outcomes that are effectively outside their control". We take a descriptive, rather than normative position: such reasoning is something people  \textbf{simply do}; understanding its implications is strategically valuable.} Alternatively,  $\pi(x,\overline{x}_{-i})$ measures the extent to which action $x$ is "the morally best course of action". \textbf{Such a justification is cognitively plausible:} experiments have shown  \cite{levine2020logic} that people often employ such "universalization" arguments when judging the morality of a given behavior. 

The questions we attempt to start answering in this paper are: 
\begin{itemize}
\item[1.] \textit{Can we extend the definition of Kantian equilibria to cover all natural cases of "Kantian behavior"?}

\begin{mdframed} 
However, we are \textbf{not} simply looking in our generalization for yet another equilibrium notion of primarily mathematical interest, but \underline{for one satisfying specific tractability requirements that ensure} \underline{easy implementation in computational agents.} Specifically, a  target concept should at least be: (I) \textbf{expressive}, i.e. indicative of realistic behavior of human agents in sufficiently typical situations (II) \textbf{cognitively plausible}: the equilibrium should \textbf{not} be justifiable in terms of expensive epistemic assumptions (the way common knowledge of rationality can be used to justify Nash equilibria \cite{aumann1995epistemic}); (III) \textbf{logically tractable}: proposed equilibria should be easy to specify formally, in a way that translates to efficient implementations. (IV) \textbf{computationally tractable}: equilibria should be easy to compute \cite{van2019cognition}, since bounded rational agents are assumed to compute (and play) them. 
\end{mdframed} 
\textbf{The main message of the paper is that any general notion of Kantian equilibria may be of theoretical interest only:} while an interesting extension for certain symmetric games exists (Sec.~\ref{sec:gen}) based on \emph{program equilibria}, it's not clear how to further extend it. Together with intractability (Thm.~\ref{nphard}) this suggests that \textbf{a general, practically relevant, notion of Kantian equilibrium might not exist.} 

\item[2.] \textit{What is the relation between Kantian equilibria and Bacharach's (informally defined) team-reasoning equilibria \cite{bacharach1999interactive}?} The answer is that Kantian equilibria are a proper subset of team-reasoning equilibria. 

\item[3.] \textit{Given that the answer to Q1 is negative, are there more specialized equilibria related to Kantian optimization that satisfy (I)-(IV)?} We will show that there exist, indeed, several more restrictive equilibrium notions, satisfying tractability and plausibility constraints, and relate them to Kantian equilibria. 
 
\item[4.] \textit{Real people are seldom purely selfish or purely Kantian. (How) can we formalize this?} We give such a definition, and motivate it through the case of Prisoners' Dilemma. 
\end{itemize}

The outline of the paper is as follows: In Section~\ref{sec:prelim} we review some basic notions. In Section~\ref{seq:kantian} we obtain some further results on (and highlight some limitations of) Kantian equilibria: first of all, we point out that finding a mixed Kantian equilibrium is computationally intractable even for two-player symmetric  games (Theorem~\ref{nphard}). Second, multiple Kantian equilibria may exist, and lack of coordination on the same equilibrium may be detrimental to players, even with all of them playing a common linear combination of Kantian actions. In Section~\ref{sec:gen} we discuss the problem of extending Kantian equilibria to non-symmetric games. giving a proposal based on the concept of program equilibria. As such, our proposal inherits the problems of this concept. Given these problems, in Section~\ref{sec:efficient} we propose several other-regarding equilibria.\footnote{Generally, ethical egoism and its variant, rational egoism, are not accepted as a basis of moral behavior; counterexamples exist,  \cite{rand1964virtue}; however, it's fair to say that such positions are controversial, and somewhat marginal. In contrast, moral and other-regarding behaviors are better aligned, with other-regarding behavior often a consequence of moral play.} We show (Theorem~\ref{thm:efficient}) that these equilibria can be computed efficiently, that they are indeed Kantian equilbria (according to our generalized definition), and that they yield Kantian equilibria for symmetric coordination games. Finally, in Section~\ref{sec:bounded} we relax the assumption that the agents are other-regarding: we assume that agents have a degree of greed, zero for Kantian agents, infinite for Nashian agents. We show (Theorem~\ref{pdgreed}) how our definition applies to Prisoners' Dilemma. 

For reasons of space, \textbf{all proof details are deferred to the Appendix.} So have we done, for reasons of abundance of technical details, with some of the results: e.g. the ones the proper definition and characterization of Kantian program equilibria (Theorems~\ref{kantian-indices},~\ref{character-kantian}), which also clarify the connection between Kantian and  \emph{team reasoning equilibria}.  
\section{Preliminaries} 
\label{sec:prelim}

We assume knowledge of basic results of game theory at the level of a textbook such as, e.g. \cite{osborne:rubinstein:book}, in particular with concepts such as normal form games, best response strategy, and mixed (Nash) equilibria. All the games $G$ we consider are normal form and, unless mentioned otherwise, have identical action sets $Act_{G}$ for all players. Given a finite set $S$, we will define $\Delta(S)$ to be the set of probability distributions on $s$. Elements of $\Delta(S)$ are functions $c:S\rightarrow [0,1]$ satisfying $\sum\limits_{i\in S} c(i)=1$. $\Delta^{n}:=\Delta(\{1,2,\ldots, n\})$ is, geometrically, a $(n-1)$-dimensional simplex. When $G$ is a normal-form game and $k$ a player in the game we will denote by $\Delta_{G}^{k}$ the set of mixed actions available to player $k$, identified with some simplex $\Delta^{n}$ with a suitable dimension. We will occasionally drop $k$ from the notation and simply write $\Delta_G$ instead when the player is clear from the context, or when all agents have the same action set.  Given vectors $x=(x_{1},\ldots, x_{n})$ and $y=(y_{1},\ldots, y_{n})$, we say that \emph{$x$ dominates $y$} iff $x_{i}\geq y_{i}$ for all $i=1,\ldots, n$. The domination is \emph{strict} if at least one inequality is. When comparing (mixed) action profiles \textbf{a} and \textbf{b}, the domination relation may apply to the vectors of agent utilities $(u_{1}(\textbf{a}), \ldots,  u_{n}(\textbf{a}))$ and $(u_{1}(\textbf{b}), \ldots,  u_{n}(\textbf{b}))$, respectively. Action profiles that are strictly dominated may be assumed not to occur in game play. 

A game with identical action sets is \emph{diagonal} if every pure action profile is dominated by some profile on the diagonal, the set of action profiles where all players play the same action. A particular class of diagonal games are \emph{coordination games}, where all player utilities are zero outside the diagonal. Such a game is \emph{symmetric} if, additionally, agent utilities are identical for all action profiles on the diagonal. 


\begin{definition} 
 Let $G$ be a game with common action set $A$. A \emph{variation function} is a function $\phi:\Xi \times A\rightarrow A$, for some set of parameters $\Xi$.\footnote{The precise form of this  definition follows \cite{sher2020normative}, and is motivated by Roemer's definition of \emph{additive/multiplicative Kantian equilibria}, with action set  $A=\mathbb{R}_{+}$ and variation functions $\phi(r,a)=a+r$, $a\cdot r$, respectively. \textbf{We will mostly be concerned with variation functions of the type "change (everyone's) current action to  b"} (for $b\in A$). Formally, $\Xi=A$ and $\phi(b,a)=b$.} 
 A \emph{Kantian (Hofstadter) equilibrium} is a pure strategy profile $x^{opt}=(x^{opt}_{1},\ldots, x^{opt}_{n})$ that maximizes the material payoff 
of each agent, should everyone deviate similarly. Formally, for every agent $i$ and $r\in \Xi$, 
\[
V_i(x^{opt}_{1}, \ldots, x^{opt}_{n})\geq V_{i}(\phi(r,x^{opt}_{1}), \ldots, \phi(r,x^{opt}_{n})).
\]

\label{def:basic}
\end{definition} 

\begin{example} 
One of the original applications of Kantian equilibria was Prisoners' Dilemma (PD, Fig.~\ref{fig:pd} (a)). Kantian equilibria provide an elegant solution to the paradox: Kantian agents  coordinate on action profile (C,C), as jointly doing so gives them a higher payoff than the Nash equilibrium (D,D). 

\begin{figure} 
\begin{center} 
\begin{minipage}{.40\textwidth}
\begin{game}{2}{2}[Player 1][Player 2]
	    &  C      &  D     \\
	 C  &  $2, 2$ & $0, 3$  \\
	 D  &  $3, 0$ & $1, 1$\\
\end{game}
\end{minipage} 
\begin{minipage}{.40\textwidth}
\begin{game}{2}{2}[Player 1][Player 2]
	    &  B     &  S     \\
	 B  &  $2, 3$ & $0, 0$  \\
	 S  &  $1, 1$ & $3, 2$\\
\end{game}
\end{minipage} 
\end{center} 
\caption{a. Prisoners' Dilemma. b. BoS game as modified by Roemer.} 
\label{fig:pd} 
\end{figure} 

\end{example} 

Kantian equilibria are easiest to justify for symmetric diagonal games, since in this case they  dominate all other action profiles, thus can be properly seen as "best course of action for all".  There are symmetric (nondiagonal) games, though, where no pure strategy Kantian equilibrium is adequate, and which seem to compel us to considering mixed-strategy Kantian equilibria. An example is Hofstadter's "Platonia's Dilemma" \cite{superrational}, a special case of the \emph{market entry games} of Selten and G\"{u}th \cite{selten1982equilibrium}: 

\begin{definition} 
In Platonia Dilemma $n$ agents (say, $n=20$) are offered to win a prize. Agents may choose to send their name to a referee. An agent wins the prize if and only if it is \textbf{the only one
submitting their name}: if zero or at least two agents send their names then noone wins anything. 
\end{definition}

It is easy to see that both pure strategies, sending/not sending their name, are equally bad if adopted by all agents: they get zero payoff. A better option is to allow independent randomization: 

\begin{definition} 
Given a game $G$ with identical actions, a  \textbf{mixed Kantian agent} will choose a mixed strategy $X^{OPT}\in \Delta_G$ that maximizes its expected utility, should everyone play $X$. For two-player symmetric games with game matrix $A$ and variation function $\phi(b,a)=b$ we have  
$X^{OPT}= argmax\{y^{T}Ay: y\in \Delta_{G}\}$. 
\end{definition}

\begin{lemma} In Platonia Dilemma the probabilistic strategy where each agent independently submits their name with probability $p$
brings an expected profit to every agent equal to $p(1-p)^{n-1}$. This quantity is maximized for $p=\frac{1}{n}$. Thus the strategy with $p=\frac{1}{n}$ is a mixed Kantian equilibrium. 
\label{one}
\end{lemma}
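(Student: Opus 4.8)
The plan is to verify the three assertions of the statement in order: the closed form $p(1-p)^{n-1}$ for each agent's expected payoff under the symmetric profile, the fact that $p=\frac{1}{n}$ maximizes this expression, and finally the conclusion that the resulting profile is a mixed Kantian equilibrium.

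First I would fix an arbitrary agent $i$ and compute its winning probability under the symmetric profile in which every agent independently submits their name with probability $p$. Normalizing the prize to one unit, agent $i$ collects it in precisely one event: $i$ itself submits, which happens with probability $p$, \emph{and} each of the other $n-1$ agents abstains. By independence of the randomizations, the latter event has probability $(1-p)^{n-1}$, so the two events combine multiplicatively and the expected payoff of agent $i$ is $p(1-p)^{n-1}$. Since the setup is completely symmetric across agents, the same quantity is the expected payoff of every agent, establishing the first claim.

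Next I would maximize $f(p)=p(1-p)^{n-1}$ over the interval $[0,1]$. Differentiating and factoring out the common power gives $f'(p)=(1-p)^{n-2}\,(1-np)$, whose only root in the open interval $(0,1)$ is $p=\frac{1}{n}$. A sign analysis of $f'$ then shows that $f$ is increasing on $[0,\frac{1}{n}]$ and decreasing on $[\frac{1}{n},1]$, while both endpoints $p=0$ and $p=1$ yield payoff zero; hence $p=\frac{1}{n}$ is the global maximizer on $[0,1]$. The only point requiring a little care is confirming that this interior critical point is a global rather than merely a local maximum, which the monotonicity argument handles cleanly, so I do not expect any genuine obstacle here.

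Finally I would invoke the definition of a mixed Kantian agent given above. In this two-action game a symmetric mixed profile is fully determined by the single submission probability $p$, so ranging $X$ over $\Delta_G$ is the same as ranging $p$ over $[0,1]$, and the Kantian optimization criterion reduces exactly to maximizing $f(p)$. Because $p=\frac{1}{n}$ is that maximizer, the profile in which every agent submits with probability $\frac{1}{n}$ is a mixed Kantian equilibrium, completing the argument.
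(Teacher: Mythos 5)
Your proposal is correct and follows essentially the same route as the paper: the paper's proof simply differentiates $f(p)=p(1-p)^{n-1}$ to get $f'(p)=(1-p)^{n-2}(1-np)$ and reads off the monotonicity on $[0,1/n]$ and $[1/n,1]$. Your additional verification of the payoff formula and the explicit appeal to the definition of a mixed Kantian equilibrium are fine elaborations of steps the paper leaves implicit.
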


\vspace{-5mm}
\section{Limitations of Mixed Kantian Equilibria}
\label{seq:kantian}

In this section we  note some properties of mixed Kantian equilibria. They are mostly negative: finding a mixed Kantian equilibrium is intractable. Also, such equilibria may be vulnerable to miscoordination. 

\subsection{The Computational Intractability of Mixed Kantian Equilibria}

We make an easy observation concerning the computational complexity of mixed Kantian equilibria in symmetric two player games.  To our knowledge this has not been discussed before. Note: such equilibria are guaranteed to exist, since the $(n-1)$-dimensional simplex of mixed strategies  is a compact set and the common utility function is continuous. First of all, finding a mixed Kantian equilibrium is easy in symmetric coordination games, as all such equilibria coincide with pure Kantian equilibria. 

\begin{theorem} Consider a finite symmetric coordination game. Then mixed Kantian equilibria coincide with pure Kantian equilibria. Hence one can compute mixed Kantian equilibria in polynomial time.
\label{th1}   
\end{theorem}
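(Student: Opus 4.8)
The plan is to exploit the special structure of a symmetric coordination game, in which all payoffs vanish off the diagonal and the common utility on the profile where everyone plays action $a$ is a single number $d_a$. First I would observe that if all $n$ players independently use the same mixed strategy $y=(y_a)_{a\in Act_{G}}\in\Delta_{G}$, then the only profiles contributing nonzero payoff are the diagonal ones, and the all-$a$ profile occurs with probability $y_a^{\,n}$. Hence the common expected utility to be maximized collapses to $U(y)=\sum_{a\in Act_{G}} d_a\, y_a^{\,n}$; for the two-player case this is exactly $y^{T}Ay=\sum_{a} d_a y_a^{2}$, since the matrix $A$ is diagonal.

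The heart of the argument is the elementary inequality $y_a^{\,n}\le y_a$ valid for $y_a\in[0,1]$ and $n\ge 2$, with equality precisely when $y_a\in\{0,1\}$. Writing $d_{\max}=\max_{a} d_a$ (nonnegative, as coordination payoffs are), I would bound $U(y)=\sum_{a} d_a y_a^{\,n}\le d_{\max}\sum_{a} y_a^{\,n}\le d_{\max}\sum_{a} y_a=d_{\max}$, and note that the value $d_{\max}$ is attained by the pure strategy placing all mass on any action $a^{\star}$ with $d_{a^{\star}}=d_{\max}$. Since such an $a^{\star}$ is exactly a pure Kantian equilibrium for the variation function $\phi(b,a)=b$, this already shows that every pure Kantian equilibrium is a mixed Kantian equilibrium.

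For the converse I would track when equality holds throughout the chain above. When $d_{\max}>0$, equality in $\sum_{a} y_a^{\,n}\le\sum_{a} y_a$ forces $y_a^{\,n}=y_a$, i.e. $y_a\in\{0,1\}$ for every $a$, so $y$ is pure; and equality in the first step forces its support to consist only of maximizers of $d_a$. Thus every mixed Kantian equilibrium is a point mass on a pure Kantian action, giving the claimed coincidence. The conceptual point worth emphasizing is that strict convexity of $t\mapsto t^{n}$ means that spreading probability over several equally good diagonal actions strictly lowers the coordination probability $\sum_{a} y_a^{\,n}$, so randomizing can never help---mixing is actively harmful in coordination games. The only exception is the degenerate case $d_{\max}=0$ (all diagonal payoffs zero), where every strategy is trivially optimal and the statement must be read up to this degeneracy.

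Finally, tractability is immediate: a pure Kantian equilibrium of a symmetric coordination game is just a diagonal action of maximal common payoff, found by a single linear scan of the $|Act_{G}|$ diagonal entries, clearly polynomial in the size of the game. The main (and essentially only) obstacle is the bookkeeping of the equality cases needed for the converse direction, together with the nonnegativity/nondegeneracy assumption on $d_{\max}$; the inequalities themselves are routine.
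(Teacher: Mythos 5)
Your argument is correct and follows essentially the same route as the paper's: bound the common expected payoff by the maximum diagonal entry (the paper writes $\sum_{i,j}a_{i,j}p_ip_j\le \max_k(a_{k,k})(\sum_k p_k)^2$, you write $\sum_a d_a y_a^n\le d_{\max}\sum_a y_a$), then track the equality cases to conclude the support is a single maximizing diagonal action. Your write-up is marginally more careful in that it covers $n$ players and explicitly flags the degenerate case $d_{\max}=0$ (and the tacit nonnegativity of diagonal payoffs), which the paper's proof uses implicitly via the assumption $\max_k(a_{k,k})>0$.
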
 

Platonia Dilemma with $n=2$ shows that Theorem~\ref{th1} does not extend to general symmetric games. This is no coincidence:  in this case finding (or just detecting) the optimal mixed strategy is intractable: 

\begin{theorem} The following problem, called MIXED KANTIAN EQUILIBRIUM, is NP-hard: 
\begin{itemize}
\item[] INPUT: A two-player symmetric  game $G$, and an aspiration level $r\in \mathbb{Q}$. 
\item[] TO DECIDE: Is there a mixed strategy profile $x=(x_1,\ldots, x_N)$ such that the utility of every player under common mixed action $x_{1}a_1+x_{2}a_2+\ldots + x_{m}a_{m}$ is at least $r$? 
\end{itemize}
\label{nphard} 
\end{theorem}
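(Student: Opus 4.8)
The plan is to reduce from the CLIQUE problem, exploiting the fact that for a two-player symmetric game with symmetric payoff matrix $A$, the common-deviation utility of a mixed Kantian profile $x$ is exactly the quadratic form $x^{T}Ax$: when both players play $x$, both receive this value. Thus, up to the aspiration threshold, MIXED KANTIAN EQUILIBRIUM is precisely the problem of deciding whether $\max_{x\in\Delta_{G}} x^{T}Ax \geq r$, i.e. whether the maximum of a quadratic form over the simplex reaches a prescribed level. This maximum is attained by compactness of $\Delta_{G}$ and continuity of the form, as already noted before the statement.

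The key ingredient I would invoke is the classical Motzkin--Straus theorem. Given a graph $H=(V,E)$ on $n$ vertices with adjacency matrix $A_{H}$ (symmetric, zero diagonal), Motzkin and Straus showed that $\max_{x\in\Delta_{n}} x^{T}A_{H}x = 1 - \tfrac{1}{\omega(H)}$, where $\omega(H)$ is the clique number of $H$. The lower-bound direction is immediate by taking $x$ uniform on a maximum clique, which yields $\omega(\omega-1)/\omega^{2} = 1 - 1/\omega$; the harder upper-bound direction is exactly the content of the theorem, which I would simply cite. The reduction then proceeds as follows: given an instance $(H,k)$ of CLIQUE, I construct the two-player symmetric game whose payoff matrix is the adjacency matrix $A_{H}$ (a legitimate symmetric game with $0/1$ payoffs and zero diagonal), and set the aspiration level to $r = 1 - \tfrac{1}{k}$, a rational of polynomial bit-length. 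Since $1 - 1/\omega$ is strictly increasing in the integer $\omega$, we have $\omega(H) \geq k$ iff $1 - \tfrac{1}{\omega(H)} \geq 1 - \tfrac{1}{k}$, i.e. iff $\max_{x} x^{T}A_{H}x \geq r$. Hence the constructed game together with $r$ is a YES instance of MIXED KANTIAN EQUILIBRIUM exactly when $H$ contains a clique of size at least $k$, and the construction is manifestly polynomial time.

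The proof is short because the genuine difficulty (the NP-hardness of maximizing a quadratic form over the simplex) is outsourced to Motzkin--Straus; the remaining work is careful bookkeeping. The two points I would verify explicitly are: first, that the utility of \emph{each} player under the common action $x$ equals the \emph{same} quadratic form $x^{T}A_{H}x$, which is guaranteed by the symmetry of $A_{H}$ (so the "at least $r$ for every player" requirement collapses to a single inequality); and second, that the strict monotonicity of $1 - 1/\omega$ in the integer variable $\omega$ converts the clique threshold into the quadratic-form threshold \emph{with no gap}, so that the equivalence is exact rather than approximate. I expect this threshold-equivalence check, together with confirming that the instance built is a bona fide symmetric game, to be the only place requiring any care; everything else is routine.
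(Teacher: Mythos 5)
Your proposal is correct and follows essentially the same route as the paper's own proof: a reduction from CLIQUE in which the payoff matrix is the adjacency matrix of the graph, the Motzkin--Straus theorem identifies the optimum of $x^{T}Ax$ over the simplex with $1-\tfrac{1}{\omega(H)}$, and the aspiration level is set to $\tfrac{k-1}{k}$. Your additional remarks on the monotonicity of $1-1/\omega$ and on the collapse of the per-player condition to a single inequality are sound refinements of the same argument.
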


\subsection{Multiple Equilibria and miscoordination}

Optimal diagonal action profiles may fail to be unique. If the agents are not communicating (and no implicit coordination mechanisms are acting, e.g., one of the action profiles being a \emph{focal point}, such as in the Hi-Lo game from \cite{bacharach2006beyond}), agents may reach a suboptimal action profile due to their lack of coordination on the same optimal action: Consider, indeed,  the game in Figure~\ref{multiple}. $(C,C)$ and $(E,E)$ are equally good pure (and mixed) Kantian equilibria. But if one player plays $C$ and the other plays $E$ the resulting outcomes are the worst possible for both of them, being dominated by every single possible strategy profile! Randomizing among Kantian actions might not help either: miscoordination impacts even "Kantian" scenarios, where players, lacking a salient equilibrium to coordinate on, play a joint mixed strategy formed of Kantian actions.\footnote{Such a scenario is, of course, not justifiable from a usual rational choice perspective. But \textbf{it is justifiable in a Kantian setting where every player believes that choosing a pure action $a$ will immediately make all other players do the same}: a player may use the Kantian imperative to restrict itself to pure Kantian equilibria, then use the assumption  to justify playing a convex combination of pure Kantian equilibria it is indifferent between.} We quantify the degradation in performance as follows: 

\begin{figure} 

\begin{center} 
\begin{minipage}{.25\textwidth}
\begin{game}{3}{3}[Pl1][Pl 2]
	    &  C      &  D   & E   \\
	 C  &  $5, 5$ & $3, 6$  & $1,2$\\
	 D  &  $6, 3$ & $4, 4$ & $6,3$\\
	 E  &  $2, 1$ & $3, 6$ & $5,5$\\
\end{game}
\end{minipage} 
\begin{minipage}{.20\textwidth}  
\begin{game} {2}{2}[][Pl2]
	    &  C      &  D     \\
	 C  &  $10, 1$ & $0, 0$  \\
	 D  &  $0, 0$ & $4, 2$\\
\end{game}
\end{minipage} 
\begin{minipage}{.20\textwidth}
\begin{game}{2}{2}[][Pl2]
	    &  B     &  S     \\
	 B  &  $6, 1$ & $0, 0$  \\
	 S  &  $0, 0$ & $3, 2$\\
\end{game}
\end{minipage} 
\begin{minipage}{.25\textwidth} 
\begin{game}{2}{2}[][Pl2]
	    &  C    &  S    \\
	 C  &  $10, 10$ & $100, 200$  \\
	 S  &  $200, 100$ & $6, 6$\\
\end{game}
\end{minipage}

\end{center} 
\caption{(a). A game with multiple Kantian equilibria. (b). Modified BoS (Example~\ref{ex}). (c). Modified BoS (Example~\ref{bos2}). (d). An anti-coordination game. } 
\label{multiple}
\end{figure}


\begin{definition} For a symmetric game $G$ with strictly positive payoffs let $NC$ be the set of mixed action profiles composed of Kantian actions only. The \emph{price of miscoordination} of  $G$ is the ratio $
p(G)=\sup\limits_{a\in NC} \frac{u_{i}(X^{OPT})}{u_{i}(a)}$.  Because of symmetry this does not depend on the particular choice of player $i$. 
\end{definition}

The following result shows that the price of miscoordination can be arbitrarily large: 

\begin{theorem} Let $G$ be a symmetric diagonal game with $k\geq 2$ players and $r\geq 1$ pure Kantian actions. Then the price of miscoordination of  $G$ is in the range $[1, r^{k-1}]$. Both bounds are tight and can be reached in settings where players choose a Kantian action uniformly at random. 
\label{miscoord} 
\end{theorem}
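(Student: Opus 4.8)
The plan is to evaluate the common payoff $u_i$ on symmetric profiles supported on the $r$ Kantian actions and to extract both endpoints from a single computation. Write $a_1,\dots,a_r$ for the pure Kantian actions and $V:=\pi(a_j,\dots,a_j)$ for their common diagonal value (equal for all $j$ by symmetry, since all $a_j$ are Kantian). The first step is a structural lemma: in a symmetric diagonal game every utility entry is at most $V$. Indeed, by the diagonal property any pure profile has a utility vector dominated componentwise by that of some diagonal profile $(a_m,\dots,a_m)$, whose entries all equal $d_m=\pi(a_m,\dots,a_m)\le V$; hence every coordinate of every pure-profile utility vector is $\le V$. Since $u_i$ of any symmetric mixed profile is a convex combination of such entries, $u_i(\cdot)\le V$, with equality at any pure Kantian action. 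This simultaneously gives $u_i(X^{OPT})=V$ (the numerator of $p(G)$) and $u_i(a)\le V$ for every $a\in NC$, i.e. the lower bound $p(G)\ge 1$.

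For the upper bound I would use that each $a\in NC$ is a symmetric profile in which every player independently draws a Kantian action from one common distribution $\lambda=(\lambda_1,\dots,\lambda_r)$ on $\{a_1,\dots,a_r\}$ (this symmetry is exactly what makes $u_i(a)$ independent of $i$). Expanding $u_i(a)$ over pure profiles, the $r$ constant profiles $(a_j,\dots,a_j)$ contribute $\lambda_j^{\,k}V$ each, while all remaining terms are strictly positive since payoffs are strictly positive; dropping them yields $u_i(a)\ge V\sum_{j=1}^r \lambda_j^{\,k}$. As $t\mapsto t^{k}$ is convex and the objective is symmetric, $\sum_j \lambda_j^{\,k}$ is minimized over the simplex at the uniform distribution $\lambda_j=1/r$, where it equals $r^{1-k}$. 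Hence $u_i(a)\ge V\,r^{1-k}$ and $u_i(X^{OPT})/u_i(a)\le r^{k-1}$ for all $a\in NC$, so $p(G)\le r^{k-1}$.

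It remains to argue tightness. For the lower bound, a symmetric diagonal game in which all $r^{k}$ profiles built from Kantian actions pay exactly $V$ (a "block-constant" extension, or simply $r=1$) gives $u_i(a)=V$ throughout $NC$, so $p(G)=1$, attained at uniform play. For the upper bound I would take the uniform profile together with a family of games in which the off-diagonal payoffs among the Kantian actions are strictly positive but shrink to $0$; then the dropped terms vanish and $u_i(a)\to V\,r^{1-k}$, driving the ratio up to $r^{k-1}$. Because strict positivity forbids these terms from being exactly zero, $r^{k-1}$ is approached but not attained for any fixed game, so the right endpoint is a tight supremum rather than a maximum, and I would phrase the bound accordingly.

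The main obstacle is conceptual rather than computational: one must fix the correct reading of $NC$. Only for symmetric profiles (every player using the same $\lambda$) does the diagonal mass $\sum_j\lambda_j^{\,k}$ stay bounded below by $r^{1-k}$; were players allowed distinct distributions with disjoint supports, the constant profiles would carry zero probability, $u_i(a)$ could be made arbitrarily small, and the ratio would be unbounded. The footnote's justification, that each identically-reasoning Kantian player independently plays the same convex combination of the pure Kantian equilibria, is precisely what licenses the symmetric restriction, and pinning this down is the step on which the whole bound rests. The only genuinely technical point, the simplex minimization of $\sum_j\lambda_j^{\,k}$, is a routine convexity (Jensen/power-mean) argument.
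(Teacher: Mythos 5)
Your proof is correct and follows essentially the same route as the paper's: for the upper bound, drop the (strictly positive) off-diagonal terms and minimize $\sum_j \lambda_j^{k}$ over the simplex by convexity (the paper's Jensen step), and for the lower bound use that diagonality forces every payoff entry below the Kantian value $V$, with the same tightness examples on both ends. Your remark that strict positivity of payoffs makes $r^{k-1}$ a supremum attained only in the limit is a small but genuine refinement of the paper, which asserts the bound is ``obtained'' at zero off-diagonal payoffs despite the positivity hypothesis in the definition of $p(G)$.
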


The merit of this simple result is to point out that the definition of generalized Kantian equilibria needs to include scenarios where randomness is correlated, as in  \emph{correlated equilibria}  (see e.g. \cite{shoham2009multiagent}). 
\vspace{-5mm}
\section{Kantian Program Equilibria in (Pareto) Symmetric Games}
\label{sec:gen} 
Definition~\ref{def:basic} of Kantian equilibria makes the most sense in symmetric coordination games, but does not capture all the intuitive cases of Kantian behavior.  
Indeed, let us consider the BoS game, as modified by Roemer (Fig.~\ref{fig:pd} (b)).\footnote{Roemer (\cite{roemer2019we}, Proposition 2.3) argues that (S,B) is a simple Kantian equilibrium. His argument is, however, ad-hoc, based on making this profile "diagonal" by flipping the order of B and S for the second player, and the conclusion that $(B,S)$ is Kantian is, we feel, unintuitive, since $(B,B),(S,S)$ strictly dominate it. Our protocol plays $\frac{1}{2}(B,B)+\frac{1}{2}(S,S)$, different from (and better than) what Roemer calls the mixed Kantian equilibrium, where row player plays $\frac{3}{8}B+\frac{5}{8}S$ and the column player plays $\frac{3}{8}S+\frac{5}{8}B$.} Intuitively, agents would perhaps agree that the following \textbf{protocol} could be called Kantian, in that it is symmetric and both players benefit if they both follow it: flip a fair coin; if it comes out \emph{heads}, they (both) play $B$, else both play $S$. 
As described, the  protocol requires the centralized choice of a random bit, but it could easily be implemented in a distributed manner by making each of the two agents flip a (fair) coin and taking their XOR. The implementation of the protocol (Algorithm~\ref{alg2}) assumes that each agent is parameterized by an \emph{agent ID} $i\in \{1,2\}$ (not needed in this particular case) and vector $(myb,otherb)$ of random bit choices, one for each player, and shared between players. 

An even more dramatic case is that of the game from Figure~\ref{multiple} (d),  where the best outcomes are not symmetric. In these cases it even seems  irrational for the agents to play symmetric action profiles, since these action profiles are dominated by all the other action profiles!  Rather, it is plausible that agents would agree that they need to \textbf{anti}coordinate, but they have different preferences for the joint action profile to coordinate upon. A "best for all" solution would jointly play a random anti-coordinated profile $\frac{1}{2}(C,S)+\frac{1}{2}(S,C)$. As in the previous example, this course of action can be implemented by the two agents in a distributed manner, by jointly playing according to the protocol in Algorithm~\ref{alg2.2}.  In this example, in addition to the extra bit $otherb$ communicated by the other player, the protocol of each agent \textbf{makes explicit use of the agents' own $id$, $i\in \{1, 2\}$.}  
\vspace{2mm} 
\begin{mdframed}
\begin{minipage}{.45\textwidth}
\vspace{-2mm}
\begin{pseudocode}\\
{BoS}
BOS(i::ID,myb::BIT,otherb::BIT)\\
\\
\mbox{Randomly choose a bit }myb\in \{0,1\}\\
\mbox{communicate }mybit\mbox{ to the } \\
\mbox{other player as its }otherb.\\
\mbox{\textbf{if }}[myb\oplus otherb == 0]\\
 \hspace{5mm}\mbox{ \textbf{then} play }B\\
\hspace{5mm}\mbox{ \textbf{else} play }S\\
\vspace{-10mm} 
\label{alg2}
\end{pseudocode}
\end{minipage} 
\begin{minipage}{.50\textwidth}
\vspace{-2mm}
\begin{pseudocode}\\
{Anticoord}
Anticoord(i::ID,myb::BIT,otherb::BIT)\\
\\
\mbox{Randomly choose a bit }myb \in \{0,1\}\\
\mbox{communicate }myb\mbox{ to the }\\
\mbox{other player as its }otherb.\\
\mbox{\textbf{if} }[myb\oplus otherb \equiv i \mbox{ (mod 2)}] \\
\hspace{5mm}\mbox{ \textbf{then} play }C\\
\hspace{5mm}\mbox{ \textbf{else} play }S\\
\vspace{-7mm} 
\label{alg2.2}
\end{pseudocode}
\end{minipage} 
\end{mdframed}

\vspace{2mm} 
\vspace{-2mm}

\vspace{2mm} 

The intuitive conclusion of these two examples is simple: Definition~\ref{def:basic} is not sufficient. Some simple games may have coordinated \textbf{protocols} that could properly be called "Kantian". In this section \textbf{we give a somewhat more general  definition\footnote{other plausible alternatives are discussed (and ruled out) in Section~\ref{sec:other} of the Appendix. Of special interest is the relation between Kantian equilibria and \emph{team-reasoning equilibria}.}  of Kantian equilibria, but not for general games, only for a class of "symmetric" games}. There are multiple definitions of game symmetry in the literature \cite{ham2013notions,tohme2019structural}; the most important one requires that for every player $i$, action profile $(x_1,x_2,\ldots, x_n)$ and permutation $\sigma \in S_{n}$, we have $u_{\sigma(i)}(x_1,x_2,\ldots, x_n)= u_{i}(x_{\sigma(1)},x_{\sigma(2)},\ldots, x_{\sigma(n)})$. We we use a slightly less demanding definition (we call our version \emph{Pareto symmetry}, see Definition~\ref{pt} in the Appendix) to capture some asymmetric games like Romer's version of BoS. This game is asymmetric, but in an inconsequential way: all the asymmetries concern dominated profiles. 
Our extension is inspired by the concept of  \textbf{program equilibria} \cite{howard1988cooperation,tennenholtz2004program}. To define these equilibria we associate to every finite normal-form game an extended game whose actions correspond to \emph{programs}. Agents' programs have access to own and others'  sources\footnote{this aspect was important for Nashian optimization. It will be less so for us, since deviations are not present in the definition of Kantian equilibria. On the other hand, a Kantian agent playing program $P$ can make sure it is not taken advantage upon by the other players, either alone,  as in \cite{tennenholtz2004program}, by reading other players' programs and only playing $P$ when all do, or with the help of a \emph{mediator}, which implements on behalf of all players the following protocol: if all agents follow $P$ then the mediator will simulate $P$ on behalf of the  agent; otherwise it will play in a Nashian way. },  and can act on them. A program equilibrium is a Nash equilibrium in the extended game. We extend this idea to Kantian equilibria: 
\begin{definition} 
Given a (Pareto symmetric) game $G$ with identical actions sets for all players, a \emph{Kantian program equilibrium in $G$} is a probability distribution $p$ on the action profiles of $G$ such that: (a). $p$ has its support on the set of Pareto optimal strategies. (b). $p$ is implemented by \textbf{agents playing  a common program $P$} in the extended game. (c). there exists no probability distribution $q$ with properties a. and b. such that the vector of expected utilities $(E[u_{i}(q)])$ strictly dominates the vector $(E[u_{i}(p)])$. 
\label{simpe}
\end{definition} 
The  assumption that \emph{players have the same action sets} is motivated by the "common program" requirement of Def.~\ref{simpe} (b).
Point (a). encodes a simple rationality condition. Points (b). and (c). embody a generalized version of the Kantian categorical imperative: (b). encodes the constraint of identical behavior, (c). encodes the fact that implementing $p$ is "a best action" for all players.

\textbf{We only need to formalize what we mean by \textit{program} in this definition.} The semantics is inspired by the one in \cite{tennenholtz2004program}, but the full formalization is somewhat subtle. We defer a full presentation to Section~\ref{sec:defi} of the Appendix. A couple of technical points are, however, worth stating:  
\begin{itemize}
\item[-] The semantics of programs in \cite{tennenholtz2004program} does not allow for any synchronization between different versions of the same program, other than testing whether they are syntactically equivalent. Since (as recognized in Theorem~\ref{miscoord}) we need to include correlated randomness, we need to extend the semantics of programs from \cite{tennenholtz2004program}, where this is not possible. There are many ways to do it, but one way is to allow \emph{correlated sampling from distributions}: all the programs get an identical sample from a given distribution. 
\item[-] \textbf{However, simply adding correlated sampling of \emph{action profiles} to the semantics of programs from \cite{tennenholtz2004program} leads (see Theorem~\ref{kantian-indices} in the Appendix)  to paradoxical results:} every convex combination of Pareto optimal strategy profiles would be a Kantian program equilibrium. This is an issue: in Prisoners' Dilemma profiles $(C,D),(D,C)$ could perhaps be justified from a "team reasoning" perspective where one player "sacrifices itself" so that the other one walks free. To accept them as "Kantian equilibria" seems, however, problematic (see also the discussion in sections~\ref{sec:team} and~\ref{sec:why})
\item[-] If, on the other hand, \textbf{agent programs didn't communicate at all, used no private randomness, or used no specific ID/payoff information} then they would run identically for all agents, coordinating on the same  action (excluding, thus, scenarios like that of Example~\ref{ex}, that we want to model). 
\item[-] We will take a middle-ground approach, and assume that agents can use their ID and the information about the game payoffs in a very limited way, that makes the program act "identically with respect to a group of symmetries acting transitively on the set of agents". This requires us to restrict ourselves to the class of Pareto symmetric games of Definition~\ref{pt}, whose set of Pareto dominant action profiles has such symmetries. The precise technical details are spelled out in Section~\ref{sec:defi} of the Appendix, where we prove (Theorem~\ref{character-kantian}) a characterization of Kantian equilibria for Pareto symmetric games which also shows that Kantian program equilibria are a strict subset of the class of team-reasoning equilibria.

\end{itemize}

Algorithms 4.1 and 4.2 lend some credibility to the intuition that Kantian equilibria are somehow related to some "symmetric" notion of correlated equilibria. This intuition is correct: in Definition~\ref{def-sec} (in the Appendix) we define a notion of "correlated symmetric equilibrium". We then prove:  

\begin{theorem} 
Correlated symmetric equilibria of symmetric games are Kantian program equilibria. 
\label{kant-corr}
\end{theorem}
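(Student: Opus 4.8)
The plan is to verify the three defining conditions of a Kantian program equilibrium (Definition~\ref{simpe}) directly for an arbitrary correlated symmetric equilibrium $p$ of a symmetric game $G$, leaning on the symmetry of $G$ to discharge the ``common program'' requirement and on the optimality built into Definition~\ref{def-sec} to discharge Pareto maximality. Throughout I would treat the correlation device of $p$ as a shared source of randomness that all agents observe.

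First I would establish condition (a), that $p$ is supported on Pareto optimal profiles. If Definition~\ref{def-sec} does not already stipulate this, I would argue by contradiction: were some profile in $\operatorname{supp}(p)$ Pareto dominated by a profile $a^{\star}$, one could shift the offending mass toward the symmetric orbit of $a^{\star}$, symmetrize to restore invariance under the game's symmetry group, and obtain a symmetric distribution whose vector of expected utilities strictly dominates that of $p$ --- contradicting the equilibrium/optimality clause of Definition~\ref{def-sec}. This same manipulation supplies most of condition (c) as well.

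Next, condition (b): exhibiting a single common program $P$ that implements $p$ under the extended semantics of Section~\ref{sec:defi}. Here I would exploit that $p$ is invariant under a group $\Gamma$ of symmetries of $G$ acting \emph{transitively} on the players; the Pareto-symmetry hypothesis (Definition~\ref{pt}) guarantees such a $\Gamma$ on the Pareto-dominant profiles. The correlation device is realized by correlated sampling: all copies of $P$ draw the same sample, consisting of a representative action profile together with a group element $\gamma \in \Gamma$, and each agent uses only its own ID and the transitive action of $\Gamma$ to read off its designated coordinate. Because $\Gamma$ acts transitively and $p$ is $\Gamma$-invariant, this one program reproduces $p$ exactly, while using ID/payoff information only ``with respect to a transitive group of symmetries'', as the restricted semantics demands. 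Algorithms~\ref{alg2} and~\ref{alg2.2} are the $\mathbb{Z}_{2}$ instances of precisely this construction, which I would use as a template. Finally, for condition (c) --- Pareto maximality among all distributions satisfying (a) and (b) --- I would invoke the optimality clause of Definition~\ref{def-sec}: any $q$ meeting (a) and (b) is by construction a symmetric correlated strategy, so a strict Pareto improvement of $q$ over $p$ would contradict $p$ being a correlated symmetric equilibrium.

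The main obstacle I anticipate is condition (b). One must show that the \emph{restricted} program semantics --- which deliberately forbids the naive ``sample an arbitrary action profile'' device that, by Theorem~\ref{kantian-indices}, would trivialize the notion by making every convex combination of Pareto optimal profiles an equilibrium --- is nonetheless rich enough to realize every \emph{symmetric} correlated equilibrium. The crux is verifying that $\Gamma$-invariance of $p$ lets the shared sample be cast in the ``representative profile plus group element'' form, so that a single ID-parameterized program suffices and genuinely coordinates (rather than merely independently duplicating) the agents' choices. Checking that the resulting distribution coincides with $p$ and that the ID-usage respects transitivity is where the real content lies, and it is exactly the point at which the characterization of Theorem~\ref{character-kantian} would be brought to bear.
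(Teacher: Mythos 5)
Your proposal follows essentially the same route as the paper: the permutation-equivariance clauses of Definition~\ref{def-sec} are read off directly as the equivariance/common-program conditions (properties (2) and (3) of Definition~\ref{simpb}), and Pareto optimality of the support is forced by a symmetric deviation toward a dominating profile, contradicting the equilibrium inequality. The one spot to tighten is condition (a): rather than ``shifting mass and symmetrizing,'' the paper realizes the improvement as an explicit admissible symmetric deviation, $\tau_i(\omega)=\pi(Q)$ whenever the realized profile is $\pi(P)$ and $\sigma_i(\omega)$ otherwise, which is what makes the contradiction with inequality~(\ref{sec}) precise.
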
 

Kantian program equilibria allow players to obtain a better expected payoff in Platonia Dilemma: 
\begin{theorem} 
Algorithm~\ref{alg3} implements a Kantian program equilibrium for Platonia Dilemma.
\label{kantian-pd}
\end{theorem}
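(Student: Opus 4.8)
The plan is to verify directly the three conditions of Definition~\ref{simpe} for the distribution $p$ that Algorithm~\ref{alg3} induces on the action profiles of Platonia Dilemma. First I would pin down the Pareto optimal profiles. Normalizing the prize to $1$, a profile yields total payoff $1$ when exactly one agent submits (that agent gets $1$, the rest get $0$) and total payoff $0$ otherwise. Hence every profile with zero or at least two submitters gives all agents $0$ and is Pareto dominated by any single-submitter profile, while the $n$ single-submitter profiles $e_1,\dots,e_n$ (where $e_j$ denotes ``only agent $j$ submits'') are Pareto optimal: raising any agent's payoff above $0$ forces the prize away from the current winner. So the Pareto frontier is exactly $\{e_1,\dots,e_n\}$.

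Next I would describe the distribution implemented by Algorithm~\ref{alg3}: using the correlated-sampling extension of the program semantics, all agents draw the same index $j$ uniformly from $\{1,\dots,n\}$, and agent $i$ submits iff $i=j$. This is a single common program $P$ whose only use of the agent's own $id$ is the equivariant test $i=j$; under any relabeling $\sigma\in S_n$ of the agents, together with the induced relabeling of the sampled index, the program behaves identically, so it respects the transitive symmetry required by the framework and is admissible in the sense of the appendix. The induced distribution $p$ is uniform on $\{e_1,\dots,e_n\}$, which immediately gives condition (a). Each agent wins with probability $1/n$, so $E[u_i(p)]=1/n$ for every $i$; in particular this strictly beats the independent value $\tfrac{1}{n}(1-\tfrac1n)^{n-1}$ of Lemma~\ref{one}, which is the point of the construction.

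For condition (c) I would use a conservation (budget) argument that is robust to the exact program semantics. Since in every profile the agents' payoffs sum to at most $1$, for any distribution $q$ on profiles we have $\sum_i E[u_i(q)]\le 1=\sum_i E[u_i(p)]$. If some admissible $q$ strictly dominated $p$, we would need $E[u_i(q)]\ge 1/n$ for all $i$ with a strict inequality for at least one $i$, whence $\sum_i E[u_i(q)]>1$, a contradiction. Thus no distribution at all (\emph{a fortiori} none satisfying (a) and (b)) strictly dominates $p$, establishing (c). An alternative, shorter route would be to check that $p$ is a correlated symmetric equilibrium in the sense of Definition~\ref{def-sec} and invoke Theorem~\ref{kant-corr} directly, since Platonia Dilemma is symmetric, hence Pareto symmetric (Definition~\ref{pt}).

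The step I expect to be the real obstacle is condition (b): making precise that the correlated sampling together with the equivariant use of $id$ is a legitimate program under the appendix semantics and implements exactly the uniform $p$, while staying on the correct side of the paradox of Theorem~\ref{kantian-indices}, which warns that unrestricted correlated sampling of action profiles would trivialize the notion by making every convex combination of Pareto optimal profiles an equilibrium. By contrast, the Pareto-optimality check (a) and the budget argument for (c) are routine.
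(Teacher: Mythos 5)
Your proposal is correct and follows essentially the same route as the paper: the paper also dismisses conditions (a) and (b) as clear and settles (c) by the same budget argument, namely that each player's expected utility under Algorithm~4.3 is $1/n$ while the realized payoffs always sum to (at most) $1$, so no vector of expected utilities can strictly dominate $(1/n,\ldots,1/n)$. Your version merely spells out the Pareto frontier and the program-admissibility of the correlated index draw in more detail than the paper does.
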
 

\begin{mdframed}
\vspace{-2mm}
\begin{pseudocode}{Choose-Winner}{i,b_1,b_2,\ldots, b_{n}}
\mbox{Randomly choose an integer }b_{i}\in \mathbb{Z}_{n}\\ 
\mbox{\textbf{if} }[\sum_{j=1}^{n} b_{j}\equiv i\mbox{ (mod n)}]\\
\hspace{5mm}  \mbox{ \textbf{then} S(UBMIT)} \\
\hspace{5mm}\mbox{ \textbf{else} D(ON'T)}\\
\vspace{-7mm} 
\label{alg3}
\end{pseudocode}
\end{mdframed}


\vspace{-5mm}
\section{Some computationally efficient other-regarding equilibria}
\label{sec:efficient} 

As defined in the previous section, Kantian program equilibria for games with identical action sets inherit some of the definitional problems of "ordinary" program equilibria. Among them: 
\begin{itemize}
\item[-] \emph{fragility}: (Kantian) program equilibria are sensitive (see e.g. \cite{oesterheld2019robust}) to the precise specification of programs: do we insist that all agent programs are syntactically identical, or just "do the same thing"? See \cite{lavictoire2014program,van2013program} for some attempted solutions for program equilibria that could be adapted to our setting.   
\item[-] \emph{lack of generality}: Definition~\ref{simpe} it is only applicable to (some of the) games with identical action sets. To further generalize it to all finite normal-form games one would need to specify what it means for two agents to "take the same course of action" in settings with differing action sets. 

\item[-] \emph{lack of predictive power}: There may be multiple (even infinitely many) Kantian program equilibria. 

\end{itemize}

Given these objections, and with constraints (I)-(IV) in mind, we propose in the sequel a substantially more modest approach: Rather than seeking a general definition of Kantian equilibria we propose instead  \emph{several} other-regarding equilibria. \textbf{They all correspond intuitively to real-life situations, are tractable, can be justified by team reasoning and are related, for symmetric coordination games, to Kantian equilibria.}  One was independently suggested in \cite{kordonis2016model}, the other ones are first introduced here: 

\begin{definition} 
A \emph{Rawlsian equilibrium} is a prob. distrib. over Pareto optimal profiles maximizing the 
\emph{egalitarian social welfare} \cite{chevaleyre2006issues} (the expected utility of the worst-off player) and is strictly dominated by no other profile with this property.  Such equilibria implement the idea of justice as fairness \cite{rawls2001justice}.
\end{definition} 
\begin{example} 
We modify the BoS example as in Fig.~\ref{multiple} (c): perhaps 1 is a classical music lover, that gets a higher utility than the other player by going, together with its partner, to any of the two concerts. Then (S,S) is the (unique) Rawlsian equilibrium. Choosing such an equilibrium is an example of altruistic behavior from player 1, since it maximizes the payoff of its non-music-lover partner.  
\label{bos2} 
\end{example} 
\begin{definition} 
A \emph{Bentham-Hars\'anyi equilibrium} is a prob.  distrib. on Pareto optimal profiles maximizing the sum of expected payoffs.  See \cite{harsanyi1955cardinal} for a philosophical motivation. 
A \emph{best-off equilibrium} is a prob. distrib. on Pareto optimal profiles maximizing the largest expected payoff, and strictly dominated by no profile with this property. E.g., in Exp.~\ref{bos2} (B,B) is the unique Bentham-Hars\'anyi/best-off equilibrium. 
\end{definition}

Although a best-off equilibrium may not seem "fair", there exist real-life "team reasoning" situations that elicit behavior suggestive of such an equilibrium: one such example is, for instance, scenarios where members of a team "sacrifice" for one of their members (e.g. parents for a child).  

The equilibrium notions we introduced so far implicitly assumed that player utility is given by material payoffs. Sometimes the frustration a player feels is derived by counterfactually comparing its realized payoff with all possible ones. There are many implementations of this idea. The following notion quantifies the extent to which a given profile is worse for the given player than a random profile. 

\begin{definition} 
The \emph{percentile index of profile $a$ for player $i$} is the percentage of Pareto optimal profiles that would get  $i$ a strictly better payoff than $a$. 
A \emph{Rawlsian percentile equilibrium} is a profile minimizing the largest expected percentile index of all players, and strictly dominated by no profile with this property. 
\end{definition} 

\begin{example} Consider the game shown in Figure~\ref{multiple} (b). 
Then percentile indices of Pareto optimal profiles are $(0,100)$ for $(C,C)$, and $(100,0)$ for $(D,D)$, respectively. Profile $\frac{1}{2}(C,C)+\frac{1}{2}(D,D)$ is a Rawlsian percentile  equilibrium. Player 1 gets average utility 7 while player 2 gets average utility $\frac{3}{2}$. 
\label{ex}
\end{example} 

\noindent An even less cognitively sophisticated model of agent frustration relies on classifying outcomes as "happy/not happy". The following is a simple example of such a notion: 

\begin{definition} The \emph{natural expectation point of player $i$} is the median (over all undominated pure strategy profiles) payoff. If there are two medians then the average value is taken. 
A player is \emph{happy in a pure strategy profile $a$} iff its payoff is larger or equal than its natural expectation point and  \emph{unhappy} otherwise. 

An \emph{aspiration equilibrium} is a mixed strategy profile that minimizes the largest probability of unhapiness among all players and is strictly dominated by no other profile with this property. 
\end{definition} 

\begin{example} 
Take a coordination game with payoffs $(C,C)\rightarrow (10,1)$, $(D,D)\rightarrow (9,2)$, $(E,E)\rightarrow (8,3)$, $(F,F)\rightarrow (4,7)$.  
The natural expectation points of players are $8.5$ and $2.5$, respectively. The first player is happy in $(C,C)$ and $(D,D)$, the second in $(E,E)$, $(F,F)$. Hence in $\frac{1}{4}(C,C)+\frac{1}{4}(D,D)+\frac{1}{4}(E,E)+\frac{1}{4}(F,F)$ the players are happy $50\%$ of the time and no mixed action profile can do any better. 
\label{bos4}
\end{example} 
Unlike general Kantian program equilibria, the equilibria we defined are computationally tractable: 

\begin{theorem} 
Rawlsian, Rawlsian percentile, Bentham-Hars\'anyi, best-off, aspiration equilibria exist
and can be found by solving a sequence of linear programs (hence in polynomial time). 
\label{thm:efficient} 
\end{theorem}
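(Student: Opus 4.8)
The plan is to realize all five notions as maximizers/minimizers of objectives that are linear, or piecewise-linear, over one and the same polytope, and then invoke polynomial-time solvability of linear programming. First I would fix the common geometric setting. Let $\mathcal{P}=\{a_1,\dots,a_m\}$ be the set of Pareto optimal (equivalently, not strictly dominated) pure action profiles of $G$; since the payoff tensor is part of the input, $m$ is polynomial in the input size, and $\mathcal{P}$ is computable in polynomial time by testing strict domination between all pairs of profiles. Because strict domination is a strict partial order on a finite set it has maximal elements, so $\mathcal{P}\neq\emptyset$. The admissible objects in every definition are (joint) probability distributions supported on $\mathcal{P}$, i.e.\ points of the simplex $\Delta^{m}=\{p\ge 0:\sum_j p_j=1\}$, a polytope with $O(m)$ facets. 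For each player $i$ the expected utility $E[u_i(p)]=\sum_j p_j\,u_i(a_j)$ is a linear functional of $p$; likewise the expected percentile index $\sum_j p_j\,\pi_i(a_j)$ and the probability of unhappiness $\sum_j p_j\,h_i(a_j)$, where $\pi_i(a_j)$ and $h_i(a_j)\in\{0,1\}$ are constants precomputed from the payoff table and, for aspiration, from the medians defining the natural expectation points, are linear in $p$. Since $\Delta^m$ is compact and these objectives are continuous, optima exist in every case; this already settles \textbf{existence}.

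Next I would write down the primary optimization for each notion. The Bentham-Hars\'anyi objective $\sum_i E[u_i(p)]$ is linear, hence a single LP over $\Delta^m$. The Rawlsian objective $\max_p\min_i E[u_i(p)]$, the Rawlsian percentile objective (minimize over $p$ the largest expected percentile index) and the aspiration objective (minimize over $p$ the largest probability of unhappiness) are a max-min, resp.\ min-max, of finitely many linear functionals; each linearizes by the standard epigraph trick, introducing one scalar $t$ together with the constraints $t\le E[u_i(p)]$ (resp.\ $t\ge\cdots$) for all $i$ and optimizing $t$, giving again a single LP. The one genuinely different case is the best-off objective $\max_p\max_i E[u_i(p)]$, which maximizes a convex function and is therefore \emph{not} a single LP; here I would commute the two maxima, $\max_p\max_i E[u_i(p)]=\max_i\bigl(\max_p E[u_i(p)]\bigr)$, solve the $n$ inner LPs $\max_p E[u_i(p)]$ separately, and take the best. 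Let $W$ denote the optimal value returned in each case.

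The second ingredient is the refinement ``strictly dominated by no other profile with this property'', present in the Rawlsian, best-off, Rawlsian percentile and aspiration definitions. I would handle it with one further LP: restrict $\Delta^m$ to the face $F$ of distributions attaining the primary optimum (add the equality ``primary objective $=W$''; for the best-off case $F$ is the union of the finitely many faces $\{p:E[u_i(p)]=W\}$, treated one at a time), and over $F$ maximize the strictly coordinate-monotone functional $\sum_i E[u_i(p)]$. A maximizer $p^\ast$ cannot be strictly dominated by any $q\in F$: strict domination of utility vectors would force $\sum_i E[u_i(q)]>\sum_i E[u_i(p^\ast)]$, contradicting maximality. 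This produces the required undominated equilibrium (for Bentham-Hars\'anyi no refinement is needed, as a maximizer of the utility sum is already undominated). Altogether each notion is computed by $O(n)$ linear programs of size polynomial in the input, so polynomial-time LP solvability yields the claimed bound.

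The step I expect to be the main obstacle is the best-off equilibrium: unlike the other objectives it is the maximum, rather than the minimum, of the linear functionals, hence convex, so it resists the single epigraph LP; the commutation of maxima is what rescues it, at the cost of $n$ LPs and a face-by-face treatment of the undominated refinement. A minor point to get right is the modeling choice that the aspiration ``mixed strategy profile'' be read as a joint distribution on undominated profiles rather than a product of independent mixed strategies, since only then is the probability-of-unhappiness objective linear and the problem an LP; the same linearity remark underlies treating all five notions over the common simplex $\Delta^m$.
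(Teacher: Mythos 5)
Your proposal is correct and follows essentially the same route as the paper: optimize each objective over the simplex of distributions on undominated pure profiles (using the epigraph variable for the max-min/min-max objectives and a separate LP per player for best-off), then enforce nondomination by maximizing the social-welfare sum over the optimal face in a second LP. The only additions are your explicit compactness argument for existence and the remark on reading mixed profiles as joint distributions, both of which the paper leaves implicit.
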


We now connect our other regarding equilibria to Kantian equilibria in symmetric coordination games. 
We call an equilibrium point
\emph{extremal} if it cannot be written as a nontrivial convex combination of other (similar) equilibria.  
We show that extremal self-regarding equilibria generalize Kantian pure equilibria. Extremality is needed, since our equilibria are closed under convex combinations (such combinations are justifiable from a magical thinking perspective, see footnote 7), while pure Kantian equilibria are not. Because of Thm.~\ref{nphard} no similar connection is likely for mixed Kantian equilibria: 

\begin{theorem} 
In symmetric diagonal games Rawlsian, Bentham-Hars\'anyi, best-off, Rawlsian percentile, aspiration equilibria coincide with convex combinations of Kantian pure equilibria. 
\label{symmetric} 
\end{theorem}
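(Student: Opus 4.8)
The plan is to reduce all five notions to a single structural fact: in a symmetric diagonal game every Pareto optimal profile yields the \emph{same} utility vector, namely the constant vector $(M,\dots,M)$, where $M$ is the best common diagonal payoff. Once this is established, each objective function appearing in the five definitions becomes constant on the relevant feasible set and the various ``strictly dominated by no profile'' clauses become vacuous, so the five equilibrium sets collapse to one and the same object.

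First I would set up notation. By symmetry, on any diagonal profile $(a,\dots,a)$ all players receive a common payoff; write $v(a):=u_i(a,\dots,a)$ (independent of $i$) and $M:=\max_a v(a)$, attained since the action set is finite. The defining property of a diagonal game furnishes, for every pure profile $p$, a diagonal profile $(a,\dots,a)$ with $u_j(p)\le v(a)\le M$ for all $j$; hence $u_j(p)\le M$ for every profile and player. I would then prove the key structural lemma: a pure profile $p$ is Pareto optimal iff $u(p)=(M,\dots,M)$, and the same condition characterizes the pure Kantian equilibria of Definition~\ref{def:basic} for the variation function $\phi(b,a)=b$. For Pareto optimality, the diagonal profile attaining $M$ weakly dominates every $u(p)$, so any $p$ with $u(p)\ne(M,\dots,M)$ is \emph{strictly} dominated (weak domination together with a strict inequality), while $(M,\dots,M)$ is maximal and hence undominated. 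For the Kantian characterization, the condition $V_i(p)\ge V_i(b,\dots,b)=v(b)$ for all $b,i$ is exactly $u_i(p)\ge M$, which combined with $u_i(p)\le M$ forces $u(p)=(M,\dots,M)$. Consequently the pure Kantian equilibria are precisely the Pareto optimal profiles, and every convex combination of them is a distribution supported on Pareto optimal profiles with expected utility vector $(M,\dots,M)$.

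With the lemma in hand I would dispatch the five notions in turn. For Rawlsian, Bentham-Hars\'anyi and best-off equilibria the feasible set is by definition the distributions over Pareto optimal profiles; since every such distribution has expected utility $(M,\dots,M)$, the egalitarian welfare ($M$), the utilitarian sum ($nM$) and the largest payoff ($M$) are all constant, so every such distribution is optimal and, being utility-equivalent, none strictly dominates another. For the Rawlsian percentile equilibrium I would observe that the percentile index of any Pareto optimal profile is $0$ (no Pareto optimal profile beats the common value $M$), whereas any profile placing mass on a point where some player receives strictly less than $M$ gives that player a strictly positive expected percentile; minimizing the largest expected percentile index therefore forces the support onto Pareto optimal profiles. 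For the aspiration equilibrium, the natural expectation point is the median payoff over undominated profiles, which equals $M$ since all such profiles give $M$; a player is then happy exactly when it receives $M$, so minimizing the largest probability of unhappiness again forces the support onto Pareto optimal profiles. In every case the equilibrium set equals the distributions over Pareto optimal profiles, i.e.\ the convex combinations of pure Kantian equilibria.

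The step I expect to require the most care is the structural lemma, specifically the passage from weak to strict domination that rules out \emph{every} non-$(M,\dots,M)$ profile (including non-diagonal ones) as Pareto optimal, together with the dual observation that \emph{all} profiles attaining $(M,\dots,M)$, diagonal or not, are genuine pure Kantian equilibria; this is what guarantees the two sides of the claimed equality match with no leftover non-diagonal optima. The remaining work for the percentile and aspiration notions is slightly more delicate than for the first three only because their definitions range over arbitrary mixed profiles rather than over Pareto optimal ones, and there the point is simply that any leakage of probability mass off the Pareto frontier strictly increases some player's objective, which pins the support down.
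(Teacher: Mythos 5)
Your proposal is correct and follows essentially the same route as the paper's (much terser) proof sketch: use the diagonal-domination property to confine attention to the Pareto frontier, then use symmetry to show every undominated profile carries the constant utility vector $(M,\dots,M)$, so all five objectives become constant there and each equilibrium set collapses to the distributions over that frontier, which is exactly the convex hull of the pure Kantian equilibria. Your explicit handling of possible off-diagonal profiles attaining $(M,\dots,M)$, and of the percentile/aspiration notions whose feasible sets are not a priori restricted to Pareto optimal profiles, fills in details the paper's sketch glosses over but does not change the argument.
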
 
\vspace{-5mm}
\section{Agents with bounded greed} 
\label{sec:bounded} 
So far we have assumed that people are other-regarding. In reality people are not unrestricted optimizers, nor are they perfect Kantian moralists.  Alger and Weibull \cite{alger2013homo} attempted to interpolate between utilitarian agents and Kantian ones, by defining \emph{homo moralis} to be an agent whose utility has the form $u_{i}(x,y)=(1-k)\pi(x,y)+k\pi(x,x)$, where $k\in [0,1]$ is the so-called \textit{degree of morality} of the agent. They showed that evolutionary models with assortative mixing and incomplete information favor a particular kind of homo moralis, those whose degree of morality coincides with the degree of assortativity of the matching process. Interesting as this result is, it has some weaknesses. For instance \cite{alger2013homo}, homo moralis behaves like homo economicus in Prisoners' Dilemma and all constant-sum games when $k\neq 1$. In other words, agent behavior is not sensitive to the degree of morality, as long as the agent is not Kantian.

We give (for symmetric games, but the idea can be extended to general ones, via Kantian program equilibria) a definition with the same overall intention, but capturing a slightly different agent behavior: 
\begin{definition} 
Let $\lambda \in [1,\infty]$. Agent $i$ is called \emph{$\lambda$-utilitarian} if, for every action profile $(a_i,b)$, its utility $u_{i}(a_{i},b)$ is  (a). $\pi_{i}(a_{i},(\overline{a_{i}})_{-i})$  if  $a_i$ is a Kantian action. (b). 0 if $a_{i}$ is not Kantian and $\pi_{i}(a_{i},b)\leq \lambda\cdot  \pi_{i}(X^{OPT})$; (c). $\pi_{i}(a,b)$ if $a_{i}$ is not Kantian and $\pi_{i}(a_{i},b)\leq \lambda\cdot  \pi_{i}(X^{OPT})$. I.e., a $\lambda$-utilitarian agent deviates from its Kantian action $X^{OPT}$ \textbf{only} if the utility it obtains  is more than $\lambda$ times larger. 
\label{part-kant} 
\end{definition} 
We call the number $\frac{1}{\lambda-1}$ \emph{the greed index} of $i$. It varies between 0 (Kantian agents) and $\infty$ (purely utilitarian ones).  The natural equilibrium concept for such agents is no longer Kantian, but Nash equilibrium. 
Definition~\ref{part-kant} allows giving an empirically plausible justification of all possible outcomes in PD: 
\begin{theorem} 
All pure action profiles in PD are Nash equilibria of agents with varying degrees of greed. 
\label{pdgreed} 
\end{theorem}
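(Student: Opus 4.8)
The plan is to reduce the claim to a finite best-response analysis parameterized by the greed index of each of the two players. First I would pin down the ingredients that Definition~\ref{part-kant} requires. Using the formula $X^{OPT}=\mathrm{argmax}\{y^{T}Ay:y\in\Delta_G\}$ for the row matrix $A=\bigl(\begin{smallmatrix}2&0\\3&1\end{smallmatrix}\bigr)$ of the PD in Fig.~\ref{fig:pd}(a), one computes $y^{T}Ay=p+1$ for $y=(p,1-p)$, so the unique Kantian action is the pure action $C$ and $\pi_i(X^{OPT})=2$ for each player $i$. Hence in Definition~\ref{part-kant} the ``Kantian action'' is $C$, and the threshold that a deviation to $D$ must clear is $\lambda\cdot 2$.

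Next I would tabulate, for a $\lambda_i$-utilitarian player $i$, the induced utility of each cell as a function of $\lambda_i$. Playing $C$ always yields the value $\pi_i(C,C)=2$ by clause (a); playing $D$ against a cooperator yields $\pi_i(D,C)=3$ when $3>2\lambda_i$ and $0$ otherwise; playing $D$ against a defector yields $\pi_i(D,D)=1$ when $1>2\lambda_i$ and $0$ otherwise. Since $\lambda_i\ge 1$, the threshold at $\pi_i(D,D)=1$ is never crossed, so the only active threshold is at $\pi_i(D,C)=3$, giving a single critical value $\lambda_i=3/2$, i.e. greed index $\tfrac{1}{\lambda_i-1}=2$.

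The case analysis then proceeds by comparing each player's best response. If both players have greed index below $2$ (so $\lambda_i>3/2$), every deviation to $D$ is valued at $0<2$, so $C$ is strictly dominant and $(C,C)$ is the unique Nash equilibrium, recovering the Kantian coordination on $(C,C)$ noted in the introduction. Letting exactly one player be greedy (greed index above $2$, so that $u_i(D,C)=3>2=u_i(C,C)$) while the partner is not makes that player strictly prefer $D$ against $C$ and the partner prefer $C$ against $D$ (whose only profitable alternative is valued at $0$); this yields the asymmetric equilibria $(D,C)$ and $(C,D)$, one for each choice of which player is greedy, which is exactly the sense in which \emph{varying} (indeed asymmetric) degrees of greed are needed.

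The remaining and genuinely delicate case is $(D,D)$, which I expect to be the main obstacle. Under the literal magical-thinking clause (a), $C$ is always valued at $2$, strictly above the thresholded value of $D$ against a defector, so $(D,D)$ cannot arise while the cooperative option still carries its full Kantian bonus. The resolution I would argue for is the purely utilitarian (Nashian) extreme, greed index $\infty$ ($\lambda=1$): here the agent is an unrestricted material maximizer, clause (a)'s bonus is no longer operative, utilities coincide with the material payoffs $\pi_i$, and $(D,D)$ is the classical dominant-strategy Nash equilibrium of PD since $\pi_i(D,D)=1>0=\pi_i(C,D)$. Assembling the four cases, namely $(C,C)$ for small greed, $(C,D)$ and $(D,C)$ for asymmetric intermediate greed, and $(D,D)$ in the Nashian limit, exhibits every pure profile as a Nash equilibrium for a suitable assignment of greed indices, as claimed.
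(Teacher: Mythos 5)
Your argument follows the same route as the paper's own proof: compute the unique Kantian action ($C$, with $\pi_i(X^{OPT})=2$), locate the single active threshold at $\lambda=3/2$ (greed index $2$), and case-split on the two players' greed indices to obtain $(C,C)$ for two non-greedy players and $(C,D)$, $(D,C)$ when exactly one player is greedy. Where you diverge is the $(D,D)$ case, and your version is the more careful one. The paper simply asserts that two agents with greed indices $\geq 2$ coordinate on $(D,D)$ ``just as utilitarian agents would do.'' As you observe, that does not follow from Definition~\ref{part-kant} as literally written: clause (a) values $C$ at $\pi_i(C,C)=2$ no matter what the opponent plays, while $D$ against $D$ yields material payoff $1\leq \lambda\cdot 2$ for every $\lambda\geq 1$ and is therefore thresholded to $0$; so in perceived utilities each player strictly prefers to switch back to $C$, and $(D,D)$ is never a Nash equilibrium for any finite greed index (for two highly greedy players the literal definition actually makes $(C,D)$ and $(D,C)$ the pure equilibria, not $(D,D)$). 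Your resolution---reserving $(D,D)$ for the purely utilitarian extreme, greed index $\infty$, where the clause~(a) bonus is read as inoperative and utility reduces to material payoff---is consistent with the paper's remark that the greed index ``varies between $0$ (Kantian agents) and $\infty$ (purely utilitarian ones),'' and it is what is needed to make the statement literally true. So your proof is sound, matches the paper's strategy on three of the four profiles, and additionally exposes a genuine gap in the paper's one-line treatment of the $(D,D)$ case.
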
 


\vspace{-5mm}
\section{Conclusions} 

Our main contribution is bringing Kantian equilibria (and related concepts) to the attention of agent community, showing that this notion is theoretically interesting, but that the road to implementable behaviors goes through less general equilibrium concepts. 
Many of the notions we introduced, 
including Kantian program equilibria and bounded greed agents, deserve further investigation. For instance a justification like that of Theorem~\ref{pdgreed} could be used as a rationality criterion. One could look for evolutionary justifications of bounded greed agents along the lines of \cite{alger2013homo}. One could use such agents in relation to work on the concept of \emph{price of anarchy} \cite{selfish-routing}. On a more conceptual level, the use of \emph{frames} in game theory \cite{bacharach2006beyond,bermudez2021frame} and how this interacts with equilibrium notions deserves further study.  Finally, several open problems remain: Can we find algorithms for our equilibria that bypass the need for solving multiple LP's? Is the problem from Theorem~\ref{nphard} NP-complete (i.e. in NP)?  



\newpage
\section*{Appendix} 

\section{Related Work} 

The literature on other-regarding game-theoretic models is quite large, and a short section like this one cannot do justice to all the related, relevant work. Instead we have chosen to highlight a modicum of references directly relevant to our work. 

The major impetus for this work was \emph{Kantian optimization}. It was developed in \cite{roemer2010kantian,roemer2015kantian}, developing early ideas of Laffont \cite{laffont1975macroeconomic}.  The current status of the theory is consolidated in the recent book \cite{roemer2019we}. A recent special issue of the \emph{Erasmus Journal of Economics} is devoted to discussing and situating Roemer's contribution. Particularly valuable articles in this collection include \cite{braham2020kantian,sher2020normative}. 

The other strand of ideas relevant to our work concerns the concept of \emph{program equilibria}, defined in \cite{tennenholtz2004program} and further investigated in \cite{fortnow2009program,kalai2010commitment,van2013program,lavictoire2014program,barasz2014robust,critch2019parametric,oesterheld2019robust}. There are several other related (and relevant) models, such as the \emph{translucent player} model of \cite{capraro2015translucent,halpern2018game}, or \emph{mediated equilibria} \cite{monderer2009strong}.

The two other paradigms leading to the same concept for two-player symmetric coordination games, \emph{superrationality} and (especially) \emph{team reasoning} are, of course, relevant to our approach. Superrationality is rather different, though, and we only reiterate recommendations of \cite{superrational,fourny2020perfect}. The main reference for team reasoning is still \cite{bacharach2006beyond}. We also recommend papers \cite{sugden2003logic,colman2018team,gold2020team}. 

Notions of symmetry in games have been insufficiently investigated, and they play an important role in defining Kantian programs. We refer to \cite{ham2013notions,tohme2019structural} for such studies. 

Finally, an impressive amount of work on behaviorally relevant game-theoretic notions related to moral behavior is summarized in \cite{dhami2016foundations}. While it is by no means comprehensive (especially with respect to the computer science literature), it is an excellent starting point. 

\section{Proof of Lemma~\ref{one}}

\begin{proof} 
Let $f(p)=p(1-p)^{n-1}$. $f^{\prime}(p)= (1-p)^{n-1}- (n-1)p(1-p)^{n-2}= (1-p)^{n-2} (1-np)$, so $f$ is increasing on $[0,1/n]$ and decreasing on $[1/n,1]$. 
\end{proof} 

\section{Proof of Theorem~\ref{th1}}

\begin{proof} 
$E[p]=\sum_{i,j} a_{i,j} p_{i}p_{j}\leq max_{k}(a_{k,k})\cdot \sum_{i,j}p_{i}p_{j} = max_{i}(a_{i,i}) (\sum_{k} p_{k})^2 = max_{i} (a_{i,i}).$
\end{proof} 

In other words, for every distribution over actions there exists a pure action that gives players at least the same payoff. To obtain equality we must have $a_{i,j} p_{i}p_{j}\leq max_{k}(a_{k,k})\cdot p_{i}p_{j}$ for all $i,j$. 
We cannot have two distinct indices $i\neq j$ with $p_i,p_j>0$, since $a_{i,j}=0$, $max_{k}(a_{k,k})>0$. So each equilibrium is a pure Kantian equilibrium. 
\section{Proof of Theorem~\ref{nphard}}

\begin{proof} 

We point out to the existence of a reduction from CLIQUE to MIXED KANTIAN EQUILIBRIUM, that shows that the latter problem is NP-hard. In fact the reduction will only consider symmetric games with 0/1 payoffs. 

Consider, indeed, a graph $g$. Let $k$ be an integer and $(g,k)$ be the corresponding instance of CLIQUE. 

Define the symmetric two-player game $G$ whose payoff matrix is the adjacency matrix $A$ of $g$. 

Mixed Kantian equilibria $x=(x_1,\ldots, x_N)$ of $G$ correspond to optimal solutions of the following quadratic program: 
\begin{equation}
\label{program-mixed-kant}  
\left\{\begin{array}{c}
max(x^{T}Ax)\\
x_1+\ldots + x_N =1\\
x_{1},\ldots, x_N \geq 0. 
\end{array} 
\right. 
\end{equation} 

This is a problem that has been called \cite{bomze1998standard} \textit{the standard quadratic optimization problem}, and has been investigated substantially in the global optimization literature (see e.g. \cite{bomze1997evolution}).  A beautiful result due to Motzkin and Straus \cite{motzkin1965maxima} can be restated as claiming that for programs whose matrix $A$ is the adjacency matrix of a graph $g$, if $o$ is the optimum of problem~(\ref{program-mixed-kant}) then $\frac{1}{1-o}$ is the size of the maximum clique in $g$.  

Hence $(g,k)\in CLIQUE$ if and only if $(G,\frac{k-1}{k})\in$ MIXED-KANTIAN-EQUILIBRIUM.

\end{proof}

\section{Proof of Theorem~\ref{miscoord}}

\begin{proof} 
The price of miscoordination is insensitive to dividing all utilities by the same factor $\lambda$, so w.l.o.g. one may assume that the utilities agent receive on pure Kantian equilibrium profiles is 1. For the mixed action \textbf{a} where players play the $r$ Kantian actions (w.l.o.g. $1,2,\ldots, r$) with probabilities $p_{1},p_{2},\ldots p_{r}$ (which add up to 1), its expected utility is $E[u_{i}(\textbf{a})]=\sum u_i(i_{1},i_{2},\ldots i_{k})\cdot p_{i_1}p_{i_2}\ldots p_{i_k}\geq \sum\limits_{i=1}^{r} p_{i}^{k}\geq r\cdot \frac{1}{r}^{k}=\frac{1}{r^{k-1}},$  by Jensen's inequality. The upper bound is obtained when off-diagonal action profiles formed of Kantian actions only have utilities equal to 0.  As for the lower bound, for diagonal games by domination we have $u_{i}(i_{1},i_{2},\ldots i_{k})\leq 1$, so $E[u_{i}(\textbf{a})]=\sum u_{i}(i_{1},i_{2},\ldots i_{k})\cdot p_{i_1}p_{i_2}\ldots p_{i_k}\leq \sum p_{i_1}p_{i_2}\ldots p_{i_k}= (p_{1}+p_{2}+\ldots p_{r})^{k}=1.$ A game realizing the lower bound is the one where agent utilities on all pure action profiles are equal to 1. 
\end{proof} 

\section{Potential alternatives to Kantian program equilibria} 
\label{sec:other}
\subsection{Hofstadter equilibria in general games}
\label{nonsym} 

Since we are looking to extend the definition of Kantian optimization from symmetric coordination  to general games, we first note that such extensions exist under the other paradigm employed to define Kantian equilibria, superrationality:  Kant-Hofstadter equilibria have recently been extended to non-symmetric games in \cite{fourny2020perfect}. The so-called \emph{perfectly transparent equilibrium} (PTE) relies on the iterated elimination of strategies that are not individually rational. For games with 
without ties, that is those where an agent is never indifferent between two action profiles, Fourny proves that at most one action payoff can be a PTE. These facts suggest that one should at least consider PTE as candidates for the extension of Kantian equilibria. Unfortunately, they seem rather unsuitable. The following are some objections against PTE: 
\begin{enumerate} 
\item A first objection is epistemic complexity: PTE require knowledge of rationality in \emph{all possible worlds}, which is stronger than common knowledge. Also, Beard and Bail \cite{beard1994people} 
have experimentally shown that people do not always adequately perform even two levels of iterated elimination of dominated strategies (for a review of the experimental literature see Chapter 12 of \cite{dhami2016foundations}). The elimination process leading to a unique PTE is at least equally (if not more) cognitively expensive, since each round is based on a more complicated type of  rationality assumption (individual rationality versus domination), also employing multiple rounds. Thus PTE are implausible from a bounded rationality perspective.  
\item A second problem of PTE is that they do not alleviate the problem of multiple equilibria and miscoordination: Fourny \cite{fourny2020perfect} acknowledges this problem, and only defines PTE for games without ties, for which this issue does not exist. 
\item A final, and most daunting, objection against PTE is that \emph{they produce problematic predictions}: in the game in Fig.~\ref{multiple} d. (see \cite{bosch2019experiment}),  for instance, the unique PTE is $(C,C)$. However, $(C,C)$ is dominated by both $(C,S)$ and $(S,C)$ (and convex combinations thereof) hence it is difficult to justify $C$ as "the action that is best for all". 
\end{enumerate} 
\subsection{Berge Equilibria}

Berge (or Berge-Zhukovsky) equilibria are another class of cooperative equilibria. Originally defined in Berge's monograph \cite{berge1957theorie}, they were studied in the Soviet Union \cite{zhukovskii1985some,zhukovskiy2017mathematical}, while remaining relatively obscure, until recently, in the English-language literature. Formally, an action profile $x^{*}=(x^{*}_{1},x^{*}_{2},\ldots x^{*}_{n})$ is a \emph{Berge equilibrium} if for every player $i$ and every partial profile $x^{-i}$, $u_{i}(x^{*}_{i},x^{-i})\leq u_{i}(x^{*})$. In other words, the partial profile chosen by all but one of the agents is optimal for the remaining agent, given its chosen action. 

Kantian equilibria of symmetric coordination games coincide with Berge equilibria. One could naturally hope that this extends to general  games. But this is not the case: Even though Berge equilibria are an intriguing concept, worthy of further investigations, sadly they are inappropriate as extensions of Kantian equilibria. First, there are three-player games \cite{pykacz2019example} admitting no mixed-strategy Berge equilibria, and deciding the existence of mixed-strategy Berge equilibria is NP-complete \cite{nahhas2018computational}. For two-player games Berge equilibria correspond \cite{colman2011mutual} to the Nash equilibria of a modified games. In particular the seminal existence theorem for mixed Nash equilibria guarantees the existence of Berge equilibria in this case.
Second, Berge equilibria capture indeed one aspect of the Kantian categorical imperative, that of players choosing a course of action that is "best for everyone".  But \emph{they fail to explicitly capture the other one, that of all players doing, in some sense, "the same thing".} They are also vulnerable to miscoordination: an example is the game Bach or Stravinsky (\cite{osborne:rubinstein:book}, see Figure~\ref{fig:pd} b.). Here the two agents may agree that they should choose the same action, but don't necessarily agree on \textit{which one}. Nor is the "obvious" mixed strategy $\frac{1}{2}C+\frac{1}{2}S$ a suitable alternative. While it is the unique Berge equilibrium, it is \textbf{not} a Kantian equilibrium: the expected player payoff, $\frac{3}{4}$, is inferior to what they would get under either of $(C,C)$ or $(S,S)$!

\subsection{Team reasoning} 
\label{sec:team} 
In a certain sense there already exists a solution to the problem of extending Kantian equilibria from symmetric coordination games to general finite normal-form games. It is based on the \emph{team reasoning} approach of \cite{bacharach1999interactive}. 

In a nutshell, the idea is that under certain conditions agents switch from an individual-oriented to a team-oriented mode (called in \cite{bacharach2006beyond} \textit{mode-P reasoning}) based on the existence of a \emph{team utility function}

\begin{definition} 
Let $M=\{1,2,\ldots, m\}$ be a set of agents, such that agent $i$ can choose an act $a_i$ from a finite set $A_i$. Let $a=(a_i)_{i\in M}$. A \emph{team utility function} is a function $U:\prod\limits_{i\in M} A_{i}\rightarrow \mathbb{R}$. 
\end{definition}  

It is required \cite{bacharach1999interactive} that the team utility function is monotone, that is, if profile $P$ strictly dominates profile $Q$ then $U(P)>U(Q)$. 

\begin{definition} 
A \emph{team reasoning equilibrium} is a (mixed) profile $\textbf{a}$ maximizing the team utility function $U$. 

That is, to team reason, an agent $i$ first computes a profile $\textbf{a}\in \prod\limits_{i\in M} A_{i}$ maximizing $U(a)$ (we assume that such an $\textbf{a}$ is unique), then it plays $a_i$. By the monotonicity of the team utility function, a team reasoning equilibrium is Pareto optimal. 
\end{definition} 
The problems with using team reasoning to model Kantian equilibria are the following: 
\begin{itemize} 
\item[-] as described team reasoning is too general: for any Pareto optimal profile $P$ one can create a team utility function that justifies playing $P$. Whereas sometimes this may be appropriate, we repeat the observation from the main paper: intuitively it does not make sense in Prisoners' Dilemma to call profiles such as $(C,D),(D,C)$ "Kantian equilibria". But one could create team utility functions witnessing that they are team reasoning equilibria. 
\item[-] Similarly, in the game from Example~\ref{ex} intuitively we don't want to accept any solution other than $\frac{1}{2}(C,S)+\frac{1}{2}(S,C)$ as a Kantian equilibrium. 
\item[-] One could try to fix this latter problem by the choice of team utility functions: namely we could assume team utility functions $U$ that are not linear on the simplex of mixed strategies, in particular postulate that in the game from Example~\ref{ex} we have 
\[
U(\frac{1}{2}(C,S)+\frac{1}{2}(S,C))> U(\lambda(C,S)+(1-\lambda)(S,C))
\]
for all $\lambda\in [0,1]\setminus \{\frac{1}{2}\}$. But this "solution" is fairly ad-hoc, leaving too much information to the specification of $U$. Plus it does not intuitively correspond to the Kantian categorical imperative, since it formalizes what players \emph{want}, not the fact that players "do the same thing". 
\end{itemize} 

It is true that team reasoning is related to Kantian equilibria: we revisit this connection in the Section~\ref{sec:why}, where we show that team reasoning is essentially equivalent to the model of Kantian program equilibria using the definition of programs from \cite{tennenholtz2004program}. In our view, though, \textbf{Kantian optimization is a more restrictive concept than team reasoning:} in the game from Example~\ref{ex} one could justify playing $(C,S)$ or $(S,C)$ via arguments based on team reasoning. It seems to us that one should be able to give a "Kantian justification" for $\frac{1}{2}(C,S)+\frac{1}{2}(S,C)$ based on the game matrix only, without recourse to any team utility function.

\subsection{Correlated equilibria} 

The problem with the previous mixed strategy profile arose from miscoordination. A conclusion is that one may need to give up the assumption that players are choosing their actions independently, and assume, instead, some form of  \emph{correlated strategy profiles}, and perhaps, some version of \emph{correlated equilibria} \cite{aumann1974subjectivity}. Such a choice could seem natural at first: Gintis \cite{gintis2009bounds,gintis2010social} convincingly argues that correlated equilibria are more natural than Nash equilibria, and can model a wide variety of social phenomena, including social norms. Also  \cite{gilboa1989nash}, unlike Nash equilibria, computing correlated equilibria is tractable. Many of the examples of other-regarding game-playing we give in this paper can, indeed, be modeled with correlated equilibria. 

\begin{definition} 
Given a strategic game $(N,(A_i),(u_i))$, a finite probability space $(\Omega,\pi)$ and an information partition $P_i$ of $\Omega$, e \emph{strategy for player $i$} is a mapping $\sigma_{i}:\Omega\rightarrow A_i$ that is constant on each class of the information partition $P_i$. 
\end{definition} 

\begin{definition} 
A correlated equilibrium of a strategic game $(N,(A_i),(u_i))$ consists of
\begin{description} 
\item[-] a finite probability space $(\Omega,\pi)$. 
\item[-] for each player $i\in N$ an information partition $P_i$ of $\Omega$ and a strategy function $\sigma_i$ for player $i$ such that for any $i\in N$ and any strategy $\tau_i$ of player $i$, 
\[
\sum_{\omega\in \Omega} \pi(\omega)u_{i}(\sigma_{-i}(\omega),\sigma_{i}(\omega))\geq \sum_{\omega\in \Omega} \pi(\omega)u_{i}(\sigma_{-i}(\omega),\tau_{i}(\omega))
\]
\end{description} 
Note that the probability space and the information partition are part of the specification of the equilibrium. 
\end{definition} 


This definition of correlated equilibria does not directly address the issue of "the two agents doing the same thing". Also, since all Nash equilibria are correlated equilibria, \emph{some symmetric games have correlated equilibria that can hardly be intuitively classified as  "Kantian"}. This is the case of the game \emph{chicken} \cite{osborne:rubinstein:book}, whose two Nash equilibria corresponding to one player yielding to the other one are also correlated equilibria \cite{papadimitriou2008computing}. This example also shows that simply eliminating strictly dominated equilibria does not help, since it would not eliminate these "bad" equilibria. 

\section{Why adding correlated sampling to the model of programs from \cite{tennenholtz2004program} may yield unintuitive results for Kantian program equilibria}
\label{sec:why}

In this section we study the consequences of extending the notion of program from \cite{tennenholtz2004program} for the definition of Kantian equilibria. We do not give full 
details of the semantics of original program equilibria (we refer to the Appendix of \cite{tennenholtz2004program} for more), but we give a couple of details below. First of all, we adapt the EBNF grammar from \cite{tennenholtz2004program}: 
The EBNF extends the one in \cite{tennenholtz2004program} by introducing a new instruction, called SYNC, two new sorts \emph{joint-action} and \emph{group-variable}, and of corresponding rules connecting them. All changes in the grammar compared to the one from \cite{tennenholtz2004program} are bolded. 

 Informally, in the model of \cite{tennenholtz2004program} (which we follow), variables can hold arbitrary distributions of actions. 
 Group variables can hold arbitrary distributions of \emph{action profiles}.  A SYNC(<group-variable>) will sample,  \emph{synchronously among all copies of the program}, an action profile from the distribution in the group-variable, returning the vector of actions composing the action profile.  By contrast, DO(<variable>) and DO(<action>) sample player actions, \textbf{independently} across copies of the same program. We can project an action profile to an action of a given player by means of PROJ. Other details are just as in 
 \cite{tennenholtz2004program}: we use MYINDEX to let each agent know its index, and MYPROGRAM to hold the source of its own program. The definition of conditions has to be, of course, updated, to reflect the existence of joint actions. Finally, just as in \cite{tennenholtz2004program}, variables are matrices with integer indices.

\begin{align*} 
 <\mbox{program}>  ::=  \mbox{ } \Lambda | <\mbox{program}>; | <\mbox{program}>  <\mbox{program}> | \{<\mbox{program}> <\mbox{program}> \} |  \\  \mbox{ } <variable>  := <action> | \mbox{ } <\mbox{group-variable}> := <\mbox{action-profile}> |   \\   \textbf{<variable> := PROJ(<\mbox{group-variable}>)} |
\mbox{ } DO(<action>) |  \mbox{ } DO(<variable>) | \\  \mbox{ }  \textbf{SYNC(<action-profile>)} |  \mbox{ }  \textbf{SYNC(<group-variable>)} |  
\\ 
 \mbox{ } IF <condition> THEN <program> ELSE <program>  | \mbox{ }STOP\mbox{ }  \\
 \\
 <condition>  ::=  \mbox{ } (<\mbox{prog-index}> | <\mbox{variable}>|<\mbox{action}>|\textbf{<\mbox{action-profile}>}| <\mbox{player-index}> )  \\   (== | != )   \\  \mbox{ } (<\mbox{prog-index}>| <\mbox{variable}>|<\mbox{action}>| \textbf{<\mbox{action-profile}>}|<\mbox{player-index}>)  \\
 \\
 <action>  ::=  \mbox{ an arbitrary probability distribution of actions of the game. }  \\
 \\
 \textbf{<\mbox{action-profile}>}  ::=   \textbf{\mbox{ an arbitrary probability distribution of \underline{action profiles} of the game. }}   \\
 \\
 <\mbox{prog-index}>  ::=   P_1 | P_{2} | \cdots | P_{n} | MYPROGRAM  \\
 \\
 <\mbox{player-index}>  ::=   1 | 2 | \cdots | n | MYINDEX  \\
 \\
 <variable>  ::=   X(integer,integer)  \\
 \\
 \textbf{<group-variable>}  ::=   \textbf{Y(integer,integer)}  \\
\end{align*} 

\begin{theorem} Consider a finite normal-form $G$. Then for any convex combination $W$ of Pareto optimal strategy profiles there exists a program $P$ such that 
\begin{itemize} 
\item[-] when all agents play $P$ the resulting profile is $W$. 
\item[-] for no other program $Q$ it is true that the resulting profile $W_Q$ strictly dominates $W$. 
\end{itemize} 
\label{kantian-indices} 
\end{theorem}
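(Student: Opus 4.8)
The plan is to establish the two bullets separately. The first is a pure implementation claim and the second a non-domination claim; the entire novelty of the naive semantics is concentrated in the $\mathrm{SYNC}$ instruction, so both steps hinge on how $\mathrm{SYNC}$ behaves.

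For the implementation, I would encode $W=\sum_j \lambda_j P_j$ as a single distribution $D_W$ over the finitely many action profiles of $G$, assigning weight $\lambda_j$ to the Pareto optimal profile $P_j$. The common program $P$ stores $D_W$ in a group-variable $Y$, then executes one $\mathrm{SYNC}(Y)$, which by the proposed semantics draws a single profile $\mathbf{a}=(a_1,\dots,a_n)\sim D_W$ \emph{synchronously} across all running copies, so that every copy obtains the identical $\mathbf{a}$. Each agent then reads its own position with $\mathrm{MYINDEX}$, extracts its component through $\mathrm{PROJ}$, and plays $a_i$ via $\mathrm{DO}$. Since the sampling is perfectly correlated, the joint action actually realized is $\mathbf{a}$ itself, and as $\mathbf{a}\sim D_W$ the induced distribution over profiles equals $W$. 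The only thing to verify carefully is that this sequence of instructions parses in the extended EBNF and that $\mathrm{PROJ}$ composes with $\mathrm{MYINDEX}$ as intended, which is routine.

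The second bullet is where I expect the genuine difficulty. I would argue by contradiction, assuming an admissible competitor $W_Q$ (supported on Pareto optimal profiles, in the sense of Definition~\ref{simpe}) whose utility vector satisfies $E[u_i(W_Q)]\ge E[u_i(W)]$ for all $i$ with one strict inequality, and trying to derive a contradiction. The obstacle is that, as literally phrased, no contradiction can be derived for \emph{arbitrary} convex combinations, because a convex combination of Pareto optimal profiles need not be Pareto efficient: in Prisoners' Dilemma the mixture $\tfrac12(C,D)+\tfrac12(D,C)$ has utility vector $(\tfrac32,\tfrac32)$, which is strictly dominated by $(C,C)=(2,2)$, a vector that a constant common program realizes. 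I would therefore sharpen the hypothesis to combinations $W$ whose utility vector is Pareto efficient in the feasible set of $G$; for such $W$ the non-domination step becomes immediate, since the utility vector of any $W_Q$ also lies in that set and efficiency forbids a strict improvement. This restriction still covers every Pareto optimal \emph{pure} profile, and in particular the asymmetric profiles $(C,D),(D,C)$ of Prisoners' Dilemma, so the intended message is preserved: under correlated sampling an entire family of Pareto optimal profiles collapses into Kantian program equilibria, which is precisely the unintuitive behavior this section is meant to expose.
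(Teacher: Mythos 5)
Your construction for the first bullet is exactly the paper's: the paper's program \textsc{Play-Pareto} performs a single synchronized sample $Q=\mathrm{SYNC}\bigl(\begin{smallmatrix}P_1&\dots&P_m\\ w_1&\dots&w_m\end{smallmatrix}\bigr)$ followed by $\mathrm{DO}(Q[\mathrm{MYINDEX}])$, which is your $\mathrm{SYNC}$-then-$\mathrm{PROJ}$-then-$\mathrm{DO}$ sequence, and since the sample is common to all copies the induced distribution over profiles is $W$. So on the implementation half you and the paper coincide.

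On the second bullet you have caught something real. The paper disposes of it in one line (``The second claim is evident from Pareto optimality''), but that argument only works when the utility vector of $W$ is itself undominated among the achievable distributions; your counterexample $\tfrac12(C,D)+\tfrac12(S,C)$ --- i.e.\ $\tfrac12(C,D)+\tfrac12(D,C)$ in Prisoners' Dilemma, with value $(\tfrac32,\tfrac32)$ dominated by the constant-program outcome $(C,C)=(2,2)$ --- is valid, and the constant program satisfies conditions (a) and (b) of Definition~\ref{simpe}, so the theorem's second bullet is false as literally stated for arbitrary convex combinations. Your repair (restrict to $W$ whose utility vector is Pareto efficient in the convexified feasible set) is the right one, and it is in fact what the paper implicitly relies on downstream: the Corollary identifies these equilibria with team-reasoning equilibria, which are undominated by the monotonicity of the team utility function, and the later Example around Figure~\ref{modified-pd} only admits the $(C,D)/(D,C)$ mixture once the payoffs are changed so that it is no longer dominated by $(C,C)$. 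One small correction to your closing remark: the restricted statement does \emph{not} automatically cover every Pareto optimal pure profile, since a pure profile undominated by any other pure profile can still be strictly dominated by a mixture of two others (e.g.\ payoff vectors $(4,0)$, $(0,4)$, $(1,1)$, where $(1,1)$ is dominated by the half-half mixture of the first two); it happens to cover $(C,D)$ and $(D,C)$ in Prisoners' Dilemma, which is all the paper's ``paradox'' needs.
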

\begin{corollary} 
Kantian program equilibria with the definition of programs from \cite{tennenholtz2004program} coincide with team reasoning equilibria. 
\end{corollary}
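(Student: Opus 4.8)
The plan is to pin down both sides as product distributions on the Pareto set and to match monotone team utility functions with common programs. The decisive first step is to determine which outcomes a common program can implement under the ORIGINAL semantics of \cite{tennenholtz2004program}, i.e. without the SYNC primitive used in Theorem~\ref{kantian-indices}. That semantics provides no shared randomness and no runtime exchange of values---programs may only inspect one another's source---so when every agent runs the same program $P$, each agent's action is a function of the common input, its own index, and its own private random tape alone. As these tapes are independent, the induced joint distribution over action profiles factorizes as a product $\prod_i D_i$; conversely every product distribution is implemented by the program that outputs the mixed action $D_{\text{MYINDEX}}$. Thus the reachable set in the original model is exactly the family of product distributions. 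This is the original-model counterpart of Theorem~\ref{kantian-indices}: deleting correlated sampling collapses the reachable set from the whole Pareto frontier down to its product distributions, and it is this collapse that makes the coincidence with team reasoning possible.

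Next I would use the characterization of team reasoning equilibria from Section~\ref{sec:team}: a profile maximizes some monotone team utility $U$ if and only if it is Pareto optimal, since monotonicity forces any maximizer to be undominated, while any Pareto optimal profile is the strict maximizer of a suitably chosen monotone $U$. Because each team-reasoning agent plays its own component $a_i$, such an equilibrium is realized as a product profile supported on the Pareto set---exactly the shape of an original-model program outcome. I would then establish the coincidence through the correspondence $U \leftrightarrow P$. For one direction, given a team reasoning equilibrium maximizing $U$, the program ``compute the $U$-maximizer and output MYINDEX's component'' implements it in the original model; its support is Pareto optimal by monotonicity, and condition (c) of Definition~\ref{simpe} holds because the only admissible competitors are now product distributions on the Pareto set, so the correlated mixtures that dominate interior Pareto points in the extended setting of Theorem~\ref{kantian-indices} are simply unavailable. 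For the other direction, an original-model Kantian program equilibrium $p$ is an undominated product distribution on the Pareto set; assigning a common maximal team-utility value to the profiles in its support and respecting monotonicity elsewhere exhibits that support as the $U$-maximizer set, so $p$ is a team reasoning equilibrium.

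The step I expect to be the main obstacle is verifying condition (c) of Definition~\ref{simpe} against product competitors, and this is precisely where the original semantics does the work. When the Pareto set contains no nondegenerate product subset, every product distribution on it is a point mass, so no competitor can strictly dominate a Pareto optimal profile and the two notions coincide exactly. The delicate case is a Pareto set carrying genuine product structure, where a nondegenerate product distribution $q$ strictly dominates some pure Pareto profile $P$ in the individual utilities: then $P$ is still a team reasoning equilibrium yet fails condition (c). The reconciliation is that team reasoning's freedom in breaking ties lets $U$ rank the whole support of $q$ equally, so that the genuinely undominated product profile $q$---rather than $P$---is the team reasoning equilibrium matched to the Kantian one. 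Tracking these ties is exactly what turns the clean generic equality into the ``essentially equivalent'' coincidence asserted in the main text, and it sharpens the contrast with Theorem~\ref{kantian-indices}, whose correlated-sampling equilibria instead sweep out the entire Pareto frontier and therefore do not reduce to team reasoning.
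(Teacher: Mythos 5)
There is a genuine gap, and it is in your very first step: you analyze the wrong program semantics. The corollary is stated as a corollary of Theorem~\ref{kantian-indices}, and both concern the model of \cite{tennenholtz2004program} \emph{as extended in Section~\ref{sec:why}} with the SYNC instruction, i.e.\ \emph{with} correlated sampling of action profiles; that is the entire point of that section (its title is ``Why adding correlated sampling \dots''), and it is what the main text means when it says that this model makes ``every convex combination of Pareto optimal strategy profiles'' a Kantian program equilibrium. The paper's argument runs: implement an arbitrary convex combination $W=w_1P_1+\cdots+w_mP_m$ of Pareto optimal profiles by the common program PLAY-PARETO, in which all agents \emph{synchronously} sample an index $j$ with probability $w_j$ and then each agent $i$ performs $DO(P_{j,i})$; undominatedness (condition (c) of Definition~\ref{simpe}) is argued from Pareto optimality; and the coincidence with team reasoning then follows because team reasoning equilibria are exactly the distributions over Pareto optimal profiles justified by some monotone team utility function, which (team reasoning being ``too general'') is all of them. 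Your proof instead deletes SYNC, works in a model where the only randomness is private, and concludes that the reachable set collapses to product distributions. This inverts the paper's logic: the coincidence is produced by the \emph{expansion} of the reachable set to all correlated mixtures of Pareto optimal profiles, not by a collapse to its product part.

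Moreover, under your reading the asserted coincidence is simply false, so no tie-breaking repair can rescue it. The paper's notion of team reasoning equilibrium explicitly includes correlated mixtures: Section~\ref{sec:team} contemplates team utilities that are non-linear on the simplex of mixed profiles, and the proof of Theorem~\ref{character-kantian} states that all convex combinations of profiles in an orbit are team equilibria. For instance, in the anti-coordination game of Figure~\ref{multiple}~(d), the mixture $\frac{1}{2}(C,S)+\frac{1}{2}(S,C)$ is a team reasoning equilibrium, but it is not a product distribution, so it cannot be implemented by independent copies of a common program with only private randomness; it would therefore be a team reasoning equilibrium that is not a Kantian program equilibrium in your model. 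Your own ``delicate case'' concedes the failure in the other direction as well (a pure Pareto optimal profile that remains a team reasoning equilibrium yet is strictly dominated by a nondegenerate product competitor, violating condition (c)), and you patch it only by retreating to an ``essential'' equivalence --- but the corollary asserts exact coincidence, which is precisely what the SYNC-based construction of Theorem~\ref{kantian-indices} delivers.
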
 
\begin{proof} 
We use essentially the same idea as in the proof of Theorem 2 in \cite{tennenholtz2004program}. In fact, the observation was made there that in the program equilibrium constructed the theorem all agents use the same program (this is not a requirement of "ordinary" program equilibria). For clarity we have omitted one aspect from \cite{tennenholtz2004program}, that of programs inspecting other agents' programs to enforce cooperative behavior. This aspect is less important for us but can, of course, be added back to Algorithm 14.1 below in similar way to those of the program in \cite{tennenholtz2004program}. 

Let indeed $W=w_1P_1+w_2P_2+\ldots + w_mP_{m}$, where $P_1,P_2,\ldots, P_m$ are the Pareto-optimal strategy profiles and $w_1,w_2,\ldots, w_m\geq 0$, $\sum\limits_{i=1}^{m} w_i = 1$.  We consider the program PLAY-PARETO specified informally as follows\footnote{As in \cite{tennenholtz2004program}, we will neglect issues related to representing real numbers in programs. That is, we will be working in a model of \emph{real computation}. Such computation models are found in other areas of theoretical computer science: The most well-known is the BSS-model \cite{blum1998complexity}.}: Agents synchronously sample an index $j\in 1,2,\ldots, m$, value $j$ appearing with with probability $w_{j}$. 
Then each agent $i$ performs $DO(P_{j,i})$. Since  the random sampling is synchronized, when agent $i$ plays $P_{i,j}$ \textbf{all} other agents $k$ play $P_{k,j}$. Thus PLAY-PARETO implements $W$. Formally: 

\begin{mdframed}
\vspace{-2mm}
\begin{pseudocode}{Play-Pareto}{\cdot}
\mbox{Let } Q= SYNCH\big(\begin{array}{cccc} P_1 & P_2 & \ldots & P_m \\
w_1 & w_2 & \ldots & w_{m} \\
\end{array} 
\big)\\
DO(Q[MYINDEX])\\
\vspace{-7mm} 
\label{alg4}
\end{pseudocode}
\end{mdframed}

The second claim is evident from Pareto optimality. 
\end{proof} 

\section{A semantics of programs appropriate for Definition~\ref{simpe}}
\label{sec:defi} 

Apparently, one reason the program equilibrium in the proof of Theorem~\ref{kantian-indices} was able to play an arbitrary Pareto optimal profile was that the agent was able to use its ID MYINDEX as argument to an action profile in order to to select the action to play (instruction DO(Q[MYINDEX])).  Clearly, simply avoiding vectors and  ID's as indices won't do: instead of a vector $Q$ of size $r$ we could simply use $r$ different variables $Q1,Q2,\ldots, Qr$, and achieve the same goal through a convoluted sequence of IF-ELSE instructions: 

\begin{verbatim} 
IF  (MYINDEX == 1)  DO(Q1) END;
IF  (MYINDEX == 2)  DO(Q2) END;
....
IF  (MYINDEX == r)  DO(Qr) END;
\end{verbatim}

Another alternative that does not work is to modify the semantics of the SYNC instruction such that, instead of returning the vector of agent actions as a result of a SYNC instruction we only give agent $i$ access to its (the $i$'th) component of the result: it could simply DO it. We are forced to conclude that \textbf{sampling action profiles may be too strong, hence we will drop the SYNC instruction and replace it with something less powerful.}


One idea might be to replace sampling from the space of probability distributions on action profiles by sampling from an abstract "signal space" $\Omega$. Just as in correlated equilibria, signals may encode e.g. random choices of each individual program. 

\begin{example} 
Consider the Algorithm 4.3 for Platonia Dilemma. Define $\Omega$ to be $\mathbb{Z}_{n}^{n}$. An element $b\in \Omega$ will specify a choice of elements $b=(b_1,b_2,\ldots b_n)$, occurring with probability $\pi(b)=\frac{1}{n^n}$. 
\end{example} 

Simply replacing sampling of action profiles by sampling states in a signal space is, however, not enough. The way agents use signals and their index to compute the action to take must be highly constrained: (at least for symmetric games) the joint behavior of agents must be "symmetric", and not favor any particular action profile.  
If this didn't happen then agent could "break symmetry" on their own, the most extreme form of which is to coordinate on a particular action profile.

We will formalize this transitivity using \emph{group actions:} 

\begin{definition} Given a group $\Gamma$ and set $\Omega$, an \emph{action of $G$ on $\Omega$} is a mapping $\alpha: \Gamma\times \Omega \rightarrow \Omega$, often denoted by $g\cdot x$ instead of $\alpha(g,x)$ such that: 
\begin{itemize} 
\item[-] $e\cdot x=x$ for all $x\in \Omega$. 
\item[-] $g_1\cdot (g_{2}\cdot x)= (g_1\cdot g_2)\cdot x$ for all $g_1,g_2\in \Gamma$ and $x\in \Omega$. 
\end{itemize} 
\end{definition} 

\begin{example} In Platonia Dilemma $\mathbb{Z}_{n}\times \mathbb{Z}_{n}^{n}$ acts on the set of signals $\Omega$  in the following way: \\ $(i;a_1,a_2,\ldots, a_n)\cdot (b_{1},b_{2},\ldots ,b_{n})=(a_{i+1}+b_{i+1},\ldots a_{n}+b_{n},a_{1}+b_{1},\ldots a_{i}+b_{i})$.
\label{ga} 
\end{example} 

The following is the class of games to which our concept of Kantian program equilibria will apply. 

\begin{definition} A game $\Gamma$ is called \emph{Pareto symmetric} if there exists a group $H$ acting on the \textbf{set of Pareto-optimal action profiles} such that 
\begin{itemize} 
\item[-] For every Pareto optimal profile $a=(a_1,a_2,\ldots, a_n)$ and $u\in H$ there exists a permutation $\sigma \in S_n$ such that $u\cdot a = (a_{\sigma(1)},\ldots, a_{\sigma(n)})$. 
\item[-] For every two players $i\neq j$ and value $\lambda$ 
\[
|\{u\in H: (u\cdot a)_i = \lambda \}| = |\{u\in H: (u\cdot a)_j = \lambda \}|
\]
\end{itemize} 
\label{pt} 
\end{definition} 

Pareto symmetry embodies an intuitive notion of "fairness": if a Pareto optimal profile favors a certain player $i$, then there are other Pareto optimal profiles with identical payoffs (modulo a permutation of players) compensating for that. Consequently, each player is indifferent between a fair mixture of Pareto optimal positions in the same orbit of the action, although in a specific certain profile it may distinguish between the positions of other players. We also stress that in the previous definition $H$ acts on \emph{action profiles}, not necessarily on individual actions. 

\begin{example} Call (\cite{stein2011exchangeable}, see also \cite{ham2013notions}) a game \emph{standard symmetric}  if there exists a transitive subgroup $H$ of player permutations such that for $i\in N$, $\pi\in H$ and action profile $a$, 
$u_i(a)=u_{\pi(i)}(\pi(a))$. Here $\pi(a)$ is the action profile obtained from $a$ by permuting agent actions according to $\pi$. 

Every symmetric game is standard symmetric \cite{ham2013notions}. 

All standard symmetric games are Pareto symmetric: for every Pareto optimal profile $a$, profile $\pi(a)$ is Pareto optimal. Group $H$ acts on Pareto optimal action profiles in the obvious way: $\pi\cdot a = \pi(a)$. 
\end{example} 

\begin{example} Roemer's version of BoS is a non-symmetric game that is Pareto symmetric: $H=\mathbb{Z}_{2}$, acting on $(B,B),(S,S)$ in the obvious manner: $1\cdot (B,B)=(S,S)$, $1\cdot (S,S)=(B,B)$. 
\end{example} 

\begin{example} 
Platonia Dilemma is standard (hence Pareto) symmetric: the Pareto dominant action profiles are the strings of length $n$ consisting of exactly one $S$ and $n-1$ $D$. The group $\mathbb{Z}_{n}$ acts on the set of Pareto optimal states in the obvious way: $i$ shifts the $S$ state exactly $i$ positions to the right. Composing this action with the morphism $\mathbb{Z}_{n}\times \mathbb{Z}_{n}^{n}\rightarrow \mathbb{Z}_{n}$, 
$(i;a_1,a_2,\ldots, a_n)\rightarrow i + \sum\limits_{k=1}^{n} a_{k} \mbox{ (mod n)}$ we get that $H=\mathbb{Z}_{n}\times \mathbb{Z}_{n}^{n}$ also acts on the Pareto dominant action profiles. 
\end{example} 

Now a tentative definition of program suitable for Definition~\ref{simpe} might be: 

\begin{definition} Let $\Gamma$ be a Pareto symmetric game with group $H$ acting on Pareto optimal profiles.  A program over game $\Gamma$ is specified by 
\begin{itemize} 
\item[-] a finite probability space $\Omega$ called \emph{the signal space}. We assume that $H$ also acts on $\Omega$ and that the measures of elements of $\Omega$ are identical across orbits of $H$. We will denote this action of $H$ on $\Omega$ by $\circ$. 
\item[-] a mapping $P:\Omega \times [m]\rightarrow A$ such that 

\begin{itemize} 
\item[(1).] for every $\omega\in \Omega$, vector $P(\omega):= (P(\omega,i))_{i=1,m}$ is a Pareto-optimal profile in $G$. 
\item[(2).] for every $h\in H,\omega \in \Omega$, 
$P(h\circ \omega) = h\cdot P(\omega)$. 
\item[(3).] for every $i\neq j\in [m]$ there exists $h\in H$ such that for all $\omega\in \Omega$ $P(\omega,j)=P(h\circ \omega,i)$. that is, $i$ and $j$ obtain the same payoff (presumably playing the same program) modulo an action on the space of signals. 
\end{itemize}  
\end{itemize} 
\label{simpb} 
\end{definition} 

\begin{example} Consider the Algorithm 4.2. Define $\Omega$ to be $\mathbb{Z}_{2}^{2}$.  An element $b\in \Omega$ will specify a choice of bits $b=(b_1,b_2)$, occurring with probability $\pi(b)=\frac{1}{4}$. $\mathbb{Z}_{2}\times \mathbb{Z}_{2}^{2}$ acts on $\Omega$ in the natural way, $(i,a_1,a_2)\cdot (b_1,b_2)=(i+a_1+a_2+b_1,i+a_1+a_2+b_1)$. 

Define function $P:\Omega \times \{1,2\} \rightarrow \{S,D\}$ by: $P(0,0,1)=P(1,1,1)=D,P(0,1,1)=P(1,0,1)=S$, 
 $P(0,0,2)=P(1,1,2)=S,P(0,1,2)=P(1,0,2)=D$. 
 
 Then $P(i,j,1)=P(i+1,j,2)$ for all $(i,j)\in \mathbb{Z}_{2}^{2}$. 
\end{example} 


We next prove two important properties of Kantian programs: 
\begin{lemma}  Let $\Gamma$ be a Pareto symmetric game. Consider a Kantian program $\mathcal{P}$, and let $O$ be an orbit of the action of $H$ on the set of Pareto optimal equilibria. Let $P_1,P_2$ be two such elements. Then $P_1,P_2$ are generated with equal probability by the Kantian program $\mathcal{P}$. 
\end{lemma}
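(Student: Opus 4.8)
The plan is to express, for each Pareto-optimal profile $Q$, the probability that $\mathcal{P}$ emits $Q$ as the $\pi$-measure of its preimage under the decoding map $P$, and then to transport the preimage of $P_1$ onto that of $P_2$ by a single group element, checking that this transport preserves both membership and measure. First I would observe that when all agents run $\mathcal{P}$ a signal $\omega\in\Omega$ is drawn according to $\pi$ and the realized profile is exactly $P(\omega)=(P(\omega,i))_i$; hence the probability that $\mathcal{P}$ generates a Pareto-optimal profile $Q$ equals $\pi(P^{-1}(Q))$, where $P^{-1}(Q)=\{\omega\in\Omega : P(\omega)=Q\}$. It therefore suffices to show $\pi(P^{-1}(P_1))=\pi(P^{-1}(P_2))$.

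Since $P_1,P_2$ lie in the same orbit $O$, I would fix $h\in H$ with $h\cdot P_1=P_2$ and consider the map $\phi_h:\Omega\to\Omega$, $\phi_h(\omega)=h\circ\omega$, which is a bijection because $\circ$ is a group action (its inverse is $\phi_{h^{-1}}$). Using the equivariance property~(2) of Definition~\ref{simpb}, any $\omega\in P^{-1}(P_1)$ satisfies $P(\phi_h(\omega))=h\cdot P(\omega)=h\cdot P_1=P_2$, so $\phi_h$ sends $P^{-1}(P_1)$ into $P^{-1}(P_2)$; applying the same argument to $h^{-1}$ shows $\phi_{h^{-1}}$ sends $P^{-1}(P_2)$ into $P^{-1}(P_1)$. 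Hence $\phi_h$ restricts to a bijection $P^{-1}(P_1)\to P^{-1}(P_2)$.

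Finally I would invoke the hypothesis on the signal space, namely that the measures of elements of $\Omega$ are identical across orbits of $H$, which I read as full $H$-invariance of $\pi$: $\pi(h\circ\omega)=\pi(\omega)$ for every $\omega$. Re-indexing the total mass of $P^{-1}(P_2)$ through the bijection just constructed and then applying invariance term by term gives $\pi(P^{-1}(P_2))=\sum_{\omega\in P^{-1}(P_1)}\pi(h\circ\omega)=\sum_{\omega\in P^{-1}(P_1)}\pi(\omega)=\pi(P^{-1}(P_1))$, which is the claimed equality of generation probabilities.

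The only delicate point is this last step: everything reduces to $\phi_h$ being measure-preserving, and that in turn depends entirely on interpreting the signal-space clause as $H$-invariance of $\pi$ rather than some weaker orbit condition. Once that reading is granted, the equivariance of $P$ and the bijectivity of the group action make the rest purely mechanical, so I expect no genuine obstacle beyond stating the invariance hypothesis carefully. I would also note in passing that property~(3) of Definition~\ref{simpb} is not needed here; it governs the equality of payoffs between distinct players and is orthogonal to the present statement about equal generation probabilities within an orbit.
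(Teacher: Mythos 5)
Your proof is correct and follows essentially the same route as the paper's: both identify the generation probability of a profile with the $\pi$-measure of its preimage under $P$, transport $P^{-1}(P_1)$ onto $P^{-1}(P_2)$ via the group element $h$ with $h\cdot P_1=P_2$ using equivariance property (2), and conclude by the measure-invariance assumption on $\Omega$. Your write-up merely makes explicit the bijectivity of $\phi_h$ and the reading of the measure clause as $H$-invariance, details the paper leaves implicit.
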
 
\begin{proof} 
Consider two action profiles $P_1$ and $P_2$. Let $\Omega_1$ be the set of elements in $\omega\in \Omega$ such that $P(\omega)=P_1$, and similarly for $P_2$. We claim that sets $\Omega_1$ and $\Omega_2$ have equal measure. This will follow from the claim that there exists an element $h\in H$ such that $\Omega_2= \{h\circ \omega: \omega\in \Omega_1\}$ and the fact that the action preserves measure of elements of $\Omega$ across orbits of the action. 

To prove this claim it is enough to take $h\in H$ such that $h\cdot P_1=P_2$. $h$ exists since $P_1,P_2$ are in the same orbit of the action of $H$. Now everything follows from the property (2). 
\end{proof} 

\begin{lemma} 
Consider a Kantian program $\mathcal{P}$, and let $O$ be an orbit of the action of $H$ on the set of Pareto optimal equilibria and two players $i\neq j$. Then the average payoffs of $i,j$ across action profiles of the orbit $O$, as generated by $\mathcal{P}$ are equal. 
\label{foo1} 
\end{lemma}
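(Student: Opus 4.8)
The plan is to reduce the claim to a measure-preserving change of variables on the signal space $\Omega$, using the three defining properties of a Kantian program (Definition~\ref{simpb}) together with the $H$-invariance of the measure $\pi$. I read the statement as asserting that the expected payoff of player $i$, conditioned on the program's output profile lying in the orbit $O$, equals the corresponding conditional expectation for player $j$. Writing $\Omega_O := \{\omega\in\Omega : P(\omega)\in O\}$ for the signals that map into $O$, the two quantities to compare are $\bar u_i^O = \pi(\Omega_O)^{-1}\sum_{\omega\in\Omega_O}\pi(\omega)\,u_i(P(\omega))$ and the analogous $\bar u_j^O$; since both carry the same normalizing factor $\pi(\Omega_O)$, it suffices to prove equality of the two unnormalized sums.

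First I would record that $\Omega_O$ is $H$-invariant. This is immediate from property (2): if $P(\omega)\in O$ then $P(h\circ\omega)=h\cdot P(\omega)$ lies in the same orbit $O$, so $h\circ\omega\in\Omega_O$ for every $h\in H$, and $\Omega_O$ is a union of $H$-orbits inside $\Omega$. Consequently, for any fixed $h^*\in H$ the map $\omega\mapsto h^*\circ\omega$ is a bijection of $\Omega_O$ onto itself, and because $\pi$ is constant along $H$-orbits we have $\pi(\omega)=\pi(h^*\circ\omega)$ for all $\omega$. These are exactly the ingredients that make the substitution $\omega'=h^*\circ\omega$ legal inside the sum.

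Next I would invoke property (3) in its payoff form: there is a group element $h^*\in H$ (depending on the fixed pair $i\neq j$) with $u_j(P(\omega))=u_i\big(P(h^*\circ\omega)\big)$ for every $\omega$. Summing this identity against $\pi$ over $\Omega_O$ and then substituting $\omega'=h^*\circ\omega$ collapses the right-hand side, using the bijection and $\pi(\omega)=\pi(\omega')$ from the previous step, to $\sum_{\omega'\in\Omega_O}\pi(\omega')\,u_i(P(\omega'))$. This yields $\sum_{\omega\in\Omega_O}\pi(\omega)\,u_j(P(\omega))=\sum_{\omega'\in\Omega_O}\pi(\omega')\,u_i(P(\omega'))$, and dividing by $\pi(\Omega_O)$ gives $\bar u_j^O=\bar u_i^O$. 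If one prefers to phrase the average as uniform over the profiles of $O$, the preceding lemma (equiprobability of profiles within an orbit) guarantees that the program-weighted average coincides with the uniform one, so the two formulations agree.

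The step I expect to be delicate is the use of property (3). As literally written in Definition~\ref{simpb} it equates single components, $P(\omega,j)=P(h^*\circ\omega,i)$, and one must be careful that what the argument needs is the payoff-level identity $u_j(P(\omega))=u_i(P(h^*\circ\omega))$, not the action-level one; the group element realizing the payoff identity need not be the one realizing the component identity (Roemer's BoS already exhibits this gap, where the component relation holds for the identity while the payoff relation requires the nontrivial swap). I would therefore make explicit at the outset that the relation being used is the payoff-level statement ``$i$ and $j$ receive the same payoff modulo an action on signals'' that the definition is intended to encode, after which the remainder is the routine measure-preserving substitution described above.
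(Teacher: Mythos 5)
Your proof is correct and follows the same route as the paper's, whose entire argument is the single sentence ``The lemma follows from the fact that $P(\omega,i)=P(h\cdot\omega,j)$''; you have simply supplied the $H$-invariance of $\Omega_O$ and the measure-preserving substitution $\omega'=h^{*}\circ\omega$ that this sentence leaves implicit. Your closing caveat --- that property (3) of Definition~\ref{simpb} as literally written is an action-level identity, whereas the argument needs the payoff-level identity $u_j(P(\omega))=u_i(P(h^{*}\circ\omega))$ --- points at a real imprecision in the definition that the paper's one-line proof glosses over, and reading the property in its intended payoff form, as you do, is the right repair.
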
 

We will refer to the common value in Lemma~\ref{foo1} as \emph{the worth of orbit $O$}.

\begin{example} 
In Prisoner's Dilemma the worth of the profile $(C,C)$ is 2. The worth of the orbit $(C,D),(D,C)$ is $3/2$. 
\end{example} 
\begin{proof} 
The lemma follows from the fact that $P(\omega,i)=P(h\cdot \omega,j)$. 
\end{proof} 

\begin{definition} 
Given orbit $O$ of the action of $H$ on the set of Pareto optimal profiles of game $\Gamma$, denote by $U_{O}$ the uniform distribution on the profiles in $O$. That is, for every $P\in O$, $U_{O}(P)=\frac{1}{|O|}$.
\end{definition} 

Given this, we can characterize Kantian program equilibria: 

\begin{theorem} 
Kantian program equilibria, as formalized in Definition~\ref{simpe}+Definition~\ref{simpb} are characterized as convex combinations of distributions $U_{O}$, where only orbits $O$ of maximum worth appear with nonzero probability. 

Hence Kantian program equilibria are 
a proper subset of team-reasoning equilibria. 
\label{character-kantian} 
\end{theorem}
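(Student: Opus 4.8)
The plan is to prove the characterization in two directions---necessity (every Kantian program equilibrium has the stated form) and sufficiency (every distribution of the stated form is a Kantian program equilibrium)---and then to read off the comparison with team-reasoning equilibria. Throughout I write $w(O)$ for the worth of an orbit $O$ and $U_O$ for the uniform distribution on $O$.

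For necessity, suppose $p$ is realized by a valid Kantian program $\mathcal{P}$. The first step is structural: the preceding lemma (two profiles in a common orbit are generated with equal probability) shows that, conditioned on landing in an orbit $O$, the distribution induced by $\mathcal{P}$ is exactly $U_O$. Setting $\lambda_O = \sum_{P\in O} p(P)$ therefore gives $p=\sum_O \lambda_O U_O$, a convex combination indexed by the orbits meeting the support. The second step computes expected utilities: by Lemma~\ref{foo1} the average payoff of player $i$ over the profiles of $O$ equals $w(O)$ for every $i$, so $E[u_i(p)] = \sum_O \lambda_O w(O) =: V(p)$ is independent of $i$. Thus every realizable distribution has a constant expected-utility vector, and strict domination between two realizable distributions is equivalent to strict inequality of their scalar values $V(\cdot)$. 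Condition (c) of Definition~\ref{simpe} then forces $V(p)$ to be maximal, and since $V(p)$ is a convex combination of the $w(O)$, maximality holds iff $\lambda_O=0$ for every orbit that is not of maximum worth. This is precisely the claimed form.

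For sufficiency I must produce, for each maximum-worth orbit, a valid program realizing $U_O$, and then realize arbitrary convex combinations. Fixing a representative $P_O\in O$, I take signal space $\Omega=H$ with the left-regular $H$-action (uniform measure, a single orbit) and set $P(h,i)=(h\cdot P_O)_i$; the push-forward of the uniform measure along $h\mapsto h\cdot P_O$ is $U_O$ by orbit--stabilizer counting. Properties (1) and (2) of Definition~\ref{simpb} are immediate from $P(h)=h\cdot P_O$ and associativity of the action. A combination $\sum_O \lambda_O U_O$ over maximum-worth orbits is realized on the disjoint union $\bigsqcup_O \Omega_O$, assigning block $O$ total mass $\lambda_O$ and letting $H$ act blockwise; the measure is constant on each $H$-orbit as required, and the resulting value $\sum_O \lambda_O w(O)$ equals the maximal worth, so condition (c) holds. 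I expect property (3) of Definition~\ref{simpb}---the existence, for each pair $i\neq j$, of a single $h_0\in H$ with $P(\omega,j)=P(h_0\circ\omega,i)$ for all $\omega$---to be the main obstacle, since $H$ acts on whole profiles and the induced coordinate permutation may a priori depend on the profile. This is exactly where the two axioms of Pareto symmetry (Definition~\ref{pt}) and the freedom to enlarge the signal space and the acting group (as in the $\mathbb{Z}_n\times\mathbb{Z}_n^n$ constructions of the worked examples) are used to force the signal-space action to treat all players interchangeably; I would verify (3) by transporting the representative's coordinates through such an element, matching the checks carried out for Platonia's Dilemma and Roemer's BoS.

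Finally, the proper-subset claim follows by combining the characterization with Theorem~\ref{kantian-indices} and its corollary. Every distribution of the characterized form is a convex combination of Pareto-optimal profiles, hence by Theorem~\ref{kantian-indices} is realizable by an unrestricted program of \cite{tennenholtz2004program}, and by the corollary to that theorem is a team-reasoning equilibrium; this gives the inclusion. For strictness, in Prisoner's Dilemma the Pareto-optimal profiles $(C,C)$, $(C,D)$, $(D,C)$ form the orbits $\{(C,C)\}$ of worth $2$ and $\{(C,D),(D,C)\}$ of worth $3/2$ under the player-swap group $\mathbb{Z}_2$. The profile $(C,D)$, being a (degenerate) convex combination of Pareto-optimal profiles, is a team-reasoning equilibrium by the same corollary, yet it lies on the non-maximal-worth orbit and is not orbit-uniform, so the characterization rules it out as a Kantian program equilibrium. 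Hence the inclusion is strict.
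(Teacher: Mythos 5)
Your argument is correct and reaches the same characterization, but it is organized quite differently from the paper's proof and is more complete in places. The paper's proof handles sufficiency informally (``agents sample which orbit, then sample uniformly from it''), treats the orbit decomposition of an arbitrary Kantian program equilibrium as self-evident, and establishes the team-reasoning inclusion by directly defining a team utility function $U(P)=$ worth of the orbit of $P$ and invoking its monotonicity, with properness coming from any orbit of size at least two. You, by contrast, actually prove necessity: you derive $p=\sum_O\lambda_O U_O$ from the equal-probability lemma, use Lemma~\ref{foo1} to collapse the expected-utility vector to the scalar $V(p)=\sum_O\lambda_O w(O)$, and observe that condition (c) of Definition~\ref{simpe} then reduces to maximizing $V$ --- this is the cleanest part of your write-up and fills a step the paper merely asserts. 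Your sufficiency construction (signal space $\Omega=H$ with the regular action, $P(h,i)=(h\cdot P_O)_i$) is more explicit than the paper's, and you are right to flag verification of property (3) of Definition~\ref{simpb} as the delicate point, since the coordinate permutation in Definition~\ref{pt} may depend on the profile; you do not fully discharge it, but the paper's own proof does not address it at all, so you are not below its standard of rigor. Finally, you route the subset claim through Theorem~\ref{kantian-indices} and its corollary rather than constructing a team utility function, and your properness witness ($(C,D)$ in Prisoners' Dilemma, a team-reasoning equilibrium on a non-maximal-worth orbit) is concrete and arguably cleaner than the paper's ``$|O|>2$'' remark. Both approaches work; yours buys a genuine necessity proof and an explicit separating example, at the cost of leaving property (3) as an acknowledged loose end.
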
 
\begin{example} 
Consider the modified version of Prisoner's Dilemma from Figure~\ref{modified-pd}. Now agents are indifferent\footnote{at least as far as expected utility goes} between playing $(C,C)$ and playing $\frac{1}{2}(C,D)+\frac{1}{2}(D,C)$. Hence the Kantian program equilibria of this game are 
$\lambda\cdot (C,C)+ \frac{1-\lambda}{2}(C,D)+\frac{1-\lambda}{2}(D,C)$, with $\lambda \in [0,1]$. 
\end{example} 
\begin{figure} 
\begin{center} 
\begin{game}{2}{2}[Player 1][Player 2]
	    &  C      &  D     \\
	 C  &  $2, 2$ & $0, 4$  \\
	 D  &  $4, 0$ & $1, 1$\\
\end{game}
\end{center} 
\caption{Modified Prisoners' Dilemma} 
\label{modified-pd} 
\end{figure} 
\begin{proof} 

First, it is easy to see that all convex combinations of $U_O$ can be implemented by Kantian programs: informally, agents sample from a distribution that will help decide which of the orbits to choose. Then they sample a random action profile from the chosen orbit $O_i$. 

Now if the convex combination contained an orbit whose worth was not maximal then it would fail to satisfy condition (c). of Definition~\ref{simpe}. Conversely, all convex combinations of orbits with maximal worth satisfy condition (c), hence they are Kantian program equilibria. 

Let's first prove that every Kantian program equilibrium is a team equilibrium. This is easy: define the team utility function $U$ as follows: for every action profile $P$ belonging to orbit $O$, $U(P)$ is equal to the worth of the orbit $O$. The monotonicity of $U$ is evident. 

All convex combinations of elements in $O$ are a team equilibria, while only $U_{O}$ is a Kantian program equilibrium. So if $|O|>2$ there are team equilibria that are not Kantian program equilibria.

\end{proof} 

With this result in hand, \textbf{we can modify the definition of Kantian programs to make it "executable", by changing exactly one aspect of the EBNF from Section~\ref{sec:why}}, without changing the class of distributions of action profiles sampled by Kantian program equilibria. 

Namely, instead of allowing action-profile variables to be initialized with arbitrary distributions of action profiles, we will insist that 
\begin{itemize} 
\item[-] the distribution on action profiles  is only supported on Pareto optimal profiles. 
\item[-] the probability distribution is uniform on orbits of the action of $H$ (the probabilities may, of course, vary between orbits). 
\end{itemize} 

Clearly, this is enough to simulate programs such as Algorithms 4.1,4.2,4.3. Conversely, programs like these ones can easily implement any convex combination of distributions $U_{O}$.

\section{Proof of Theorem~\ref{kant-corr}}
\begin{proof} 
To explain what Theorem~\ref{kant-corr} we need to explain what a correlated symmetric equilibrium of a symmetric game is. It is \textbf{not} enough to require that the distribution of action profiles in a correlated equilibrium be symmetric: as we have mentioned, the probability space and agents' information partition is part of the specification of correlated equilibria. To qualify as "symmetric equilibrium" we have to impose some symmetry properties on the signal space, set of information partitions, and allowable deviations as well: 

\begin{definition} Consider a symmetric game $\Gamma=(N,(A_i),(u_i))$. We remind the reader that that means that for every permutation $\pi \in S_{n}$ and profile $a=(a_1,a_2,\ldots, a_{n})$, we have 
\[
u_i(a_{\pi(1)},a_{\pi(2)},\ldots a_{\pi(n)})=u_{\pi(i)}(a_1,a_2,\ldots, a_{n})
\]

A \emph{correlated symmetric equilibrium} of $\Gamma$ consists of
\begin{description} 
\item[-] a finite, symmetric probability space $(\Omega^{n},\xi)$, where $\Omega$ is a finite set. That is, we assume that for every 
$(\omega_{1},\omega_{2},\ldots, \omega_{n})\in \Omega^{n}$ and every $\pi\in S_{n}$, 
\[
\xi(\omega_{1},\omega_{2},\ldots, \omega_{n})=\xi(\omega_{\pi(1)},\omega_{\pi(2)},\ldots, \omega_{\pi(n)})
\]
\item[-] for each player $i\in N$ an information partition $P_i=(P_i^1,\ldots, P_{i}^t)$ of $\Omega^{n}$ and a strategy function $\sigma_i$, such that for every $\pi\in S_{n}$ and $t$, 
\[
P_{\pi(i)}^{t}=\pi(P_{i}^{t})\mbox{ and that }\sigma_{i}|_{P_{i}^{t}}\equiv \sigma_{\pi(i)}|_{P_{\pi(i)}^{t}}.  
\]

\item[-]  for any $i\in N$ and any strategies $\tau_i$ satisfying $\tau_{i}|_{P_{i}^{t}}\equiv \tau_{\pi(i)}|_{P_{\pi(i)}^{t}}$ we have: 

\begin{equation} 
\sum_{\omega\in \Omega^n} \pi(\omega)u_{i}(\sigma_{-i}(\omega),\sigma_{i}(\omega))\geq \sum_{\omega\in \Omega^n} \pi(\omega)u_{i}(\tau_{-i}(\omega),\tau_{i}(\omega))
\label{sec} 
\end{equation} 
\end{description} 
Note that the probability space and the information partition are part of the specification of the equilibrium. 
\label{def-sec}
\end{definition} 

Now the fact that correlated symmetric equilibria satisfy properties (2) and (3) from Definition~\ref{simpb} follows from the definition above. All it remains to show is that property (1) is true as well, namely that for every $\omega\in \Omega^n$, action profile $P=(\sigma_{i}(\omega))_{i}$ is a Pareto-optimal profile. Suppose it weren't. Then there exists an action profile $Q$ strictly dominating $P$. 

Consider the following strategies $\tau_{i}$: 
\[
\tau_{i}(\omega)=\left\{ \begin{array}{ll}
 \pi(Q) & \mbox{if }\omega=\pi(P)\mbox{ for some }\pi\in S_{n} \\
 \sigma_{i} & \mbox{ otherwise.}
\end{array}
\right. 
\]

Then $(\tau_{i})$ would contradict equation~\ref{sec}. 
\end{proof}
\section{Proof of Theorem~\ref{kantian-pd}}
\begin{proof} 
Points (a). and (b). from the definition of Kantian program equilibria are clear, the only one that merits a discussion is point (c). 

The expected utility of each player under Algorithm~\ref{alg3} is equal to $1/n$. Since the sum of utilities of all players under a particular set of random choices is equal to 1, no vector of expected utilities can strictly dominate the vector $(1/n,1/n, \ldots, 1/n)$ of expected utilities for the Algorithm. 
\end{proof}

\section{Proof of Theorem~\ref{thm:efficient}}
Let $P_{1},P_{2},\ldots, P_{m}$ be the undominated profiles in game $G$ and, for every $P_j$ and player $i$, let $p_{i,j}$ be the utility player $i$ gets from profile $P_j$. To find a Rawlsian equilibrium we first solve the LP~(\ref{third}). This computes the egalitarian social welfare $z^{*}$. Now solve system~(\ref{fourth})
. This yields a profile $(x_{1}^{*},\ldots, x_{m}^{*})$ which, by our observation, is a Rawlsian equilibrium. 
The same programs with a different semantics for $p_{i,j}$ as the percentile of $P_{j}$ among undominated profiles from the point of view of $i$, computes Rawlsian percentile equilibria.

\begin{equation} 
\left\{
\begin{array}{l}
\max(z)\\
p_{i,1}x_{1}+\ldots p_{i,m}x_{m}\geq z, i=1,\ldots, n\\
x_{1}+\ldots +x_{m}=1\\
x_{i}\geq 0. 
\end{array} 
\right. 
\label{third} 
\end{equation} 

\begin{equation} 
\left\{
\begin{array}{l}
\max(\sum\limits_{j=1}^{m} (\sum\limits_{i=1}^{n} p_{i,j})x_{j})\\
p_{i,1}x_{1}+\ldots p_{i,m}x_{m}\geq z^{*}, i=1,\ldots, n\\
x_{1}+\ldots +x_{m}=1\\
x_{i}\geq 0. 
\end{array} 
\right. 
\label{fourth} 
\end{equation}
 
As for Bentham-Hars\'anyi equilibria, we directly solve 
\begin{equation} 
\left\{
\begin{array}{l}
\max(\sum\limits_{j=1}^{m} (\sum\limits_{i=1}^{n} p_{i,j})x_{j})\\
x_{1}+\ldots +x_{m}=1\\
x_{i}\geq 0. 
\end{array} 
\right. 
\label{bh}
\end{equation} 
For aspiration equilibria we first solve the program 
\[
\left\{
\begin{array}{l}
\min(z)\\
\sum_{k\in U_{i}} x_{k}\leq z, i=1,\ldots, n \\
x_{1}+\ldots +x_{m}=1\\
x_{i}\geq 0. 
\end{array} 
\right. 
\]
where $U_{k}$ denotes action profiles making $j$ unhappy to find the smallest agent probability of unhappiness. We then maximize (using another LP) social welfare over agents realizing the minimum. 
For best-off equilibrium the algorithm is somewhat more complicated: first, for every player $i=1,\ldots, n$ the program~(\ref{unu}) yields the largest payoff $y_{i}^{*}$ player $i$ could get. If $y_j^{*}=max(y_{i}^{*}:i=1,\ldots, n)$ is the global optimum then the program~(\ref{doi}) finds the largest social welfare when agent $j$ reaches its optimal value. Maximizing the social welfare over all such optimal $j$'s gives the best-off equilibrium. 
\begin{equation} 
\left\{
\begin{array}{l}
\max(\sum\limits_{j=1}^{m} p_{i,j} x_{j})\\
x_{1}+\ldots +x_{m}=1\\
x_{i}\geq 0. 
\end{array} 
\right. 
\label{unu} 
\end{equation}
\begin{equation} 
\left\{
\begin{array}{l}
\max(\sum\limits_{j=1}^{m} (\sum\limits_{i=1}^{n} p_{i,j})x_{j})\\
x_{1}+\ldots +x_{m}=1\\
\sum\limits_{k=1}^{m} p_{j,k}x_{k} = y_{j}^{*}\\
x_{i}\geq 0. 
\end{array} 
\right. 
\label{doi} 
\end{equation} 
\section{Proof of Theorem~\ref{symmetric}}

\begin{proof}[Proof Sketch]
Because of domination only pure action profiles on the diagonal need to be considered. By symmetry, optimizing the worst-off utility or the social welfare over mixed profiles with diagonal pure strategies in the support is equivalent to maximizing the best-off utility, which implies that all pure action profiles in the support of Rawlsian (Bentham-Hars\'anyi, best-off) equilibria are Kantian. The converse is easily seen to be true. A similar argument works for Rawlsian percentile and aspiration equilibria. 
\end{proof} 

\section{Proof of Theorem~\ref{pdgreed}}
\begin{proof} 
Bounded-greed agents still coordinate on the Kantian equilibrium $(C,C)$ as long as both their greed indices are $<2$ (i.e. they would need at least a twofold increase in payoff to deviate). If one of them has greed index $<2$ and the other one has greed index $\geq 2$, then the latter one will defect. If both agents have greed indices $\geq 2$, then they will coordinate, just as if utilitarian agents would do, on the Nash equilibrium $(D,D)$. 
\end{proof}

\nocite{*}
\bibliographystyle{eptcs}
\bibliography{generic}

\begin{thebibliography}{10}
\providecommand{\bibitemdeclare}[2]{}
\providecommand{\surnamestart}{}
\providecommand{\surnameend}{}
\providecommand{\urlprefix}{Available at }
\providecommand{\url}[1]{\texttt{#1}}
\providecommand{\href}[2]{\texttt{#2}}
\providecommand{\urlalt}[2]{\href{#1}{#2}}
\providecommand{\doi}[1]{doi:\urlalt{http://dx.doi.org/#1}{#1}}
\providecommand{\bibinfo}[2]{#2}

\bibitemdeclare{article}{alger2013homo}
\bibitem{alger2013homo}
\bibinfo{author}{Ingela \surnamestart Alger\surnameend} \&
  \bibinfo{author}{J{\"o}rgen~W \surnamestart Weibull\surnameend}
  (\bibinfo{year}{2013}): \emph{\bibinfo{title}{Homo moralis - preference
  evolution under incomplete information and assortative matching}}.
\newblock {\sl \bibinfo{journal}{Econometrica}}
  \bibinfo{volume}{81}(\bibinfo{number}{6}), pp. \bibinfo{pages}{2269--2302}.

\bibitemdeclare{article}{alger2016evolution}
\bibitem{alger2016evolution}
\bibinfo{author}{Ingela \surnamestart Alger\surnameend} \&
  \bibinfo{author}{J{\"o}rgen~W \surnamestart Weibull\surnameend}
  (\bibinfo{year}{2016}): \emph{\bibinfo{title}{Evolution and Kantian
  morality}}.
\newblock {\sl \bibinfo{journal}{Games and Economic Behavior}}
  \bibinfo{volume}{98}, pp. \bibinfo{pages}{56--67}.

\bibitemdeclare{book}{ariely2008predictably}
\bibitem{ariely2008predictably}
\bibinfo{author}{Dan \surnamestart Ariely\surnameend} (\bibinfo{year}{2010}):
  \emph{\bibinfo{title}{Predictably irrational}}.
\newblock \bibinfo{publisher}{Harper}.

\bibitemdeclare{article}{aumann1995epistemic}
\bibitem{aumann1995epistemic}
\bibinfo{author}{Robert \surnamestart Aumann\surnameend} \&
  \bibinfo{author}{Adam \surnamestart Brandenburger\surnameend}
  (\bibinfo{year}{1995}): \emph{\bibinfo{title}{Epistemic conditions for Nash
  equilibrium}}.
\newblock {\sl \bibinfo{journal}{Econometrica: Journal of the Econometric
  Society}}, pp. \bibinfo{pages}{1161--1180}.

\bibitemdeclare{article}{aumann1974subjectivity}
\bibitem{aumann1974subjectivity}
\bibinfo{author}{Robert~J \surnamestart Aumann\surnameend}
  (\bibinfo{year}{1974}): \emph{\bibinfo{title}{Subjectivity and correlation in
  randomized strategies}}.
\newblock {\sl \bibinfo{journal}{Journal of mathematical Economics}}
  \bibinfo{volume}{1}(\bibinfo{number}{1}), pp. \bibinfo{pages}{67--96}.

\bibitemdeclare{article}{bacharach1999interactive}
\bibitem{bacharach1999interactive}
\bibinfo{author}{Michael \surnamestart Bacharach\surnameend}
  (\bibinfo{year}{1999}): \emph{\bibinfo{title}{Interactive team reasoning: A
  contribution to the theory of co-operation}}.
\newblock {\sl \bibinfo{journal}{Research in economics}}
  \bibinfo{volume}{53}(\bibinfo{number}{2}), pp. \bibinfo{pages}{117--147}.

\bibitemdeclare{book}{bacharach2006beyond}
\bibitem{bacharach2006beyond}
\bibinfo{author}{Michael \surnamestart Bacharach\surnameend}
  (\bibinfo{year}{2006}): \emph{\bibinfo{title}{Beyond individual choice: teams
  and frames in game theory}}.
\newblock \bibinfo{publisher}{Princeton University Press}.

\bibitemdeclare{article}{barasz2014robust}
\bibitem{barasz2014robust}
\bibinfo{author}{Mihaly \surnamestart Barasz\surnameend}, \bibinfo{author}{Paul
  \surnamestart Christiano\surnameend}, \bibinfo{author}{Benja \surnamestart
  Fallenstein\surnameend}, \bibinfo{author}{Marcello \surnamestart
  Herreshoff\surnameend}, \bibinfo{author}{Patrick \surnamestart
  LaVictoire\surnameend} \& \bibinfo{author}{Eliezer \surnamestart
  Yudkowsky\surnameend} (\bibinfo{year}{2014}): \emph{\bibinfo{title}{Robust
  Cooperation in the Prisoner's Dilemma: Program Equilibrium via Provability
  Logic}}.
\newblock {\sl \bibinfo{journal}{arXiv preprint arXiv:1401.5577}}.

\bibitemdeclare{article}{beard1994people}
\bibitem{beard1994people}
\bibinfo{author}{T~Randolph \surnamestart Beard\surnameend} \&
  \bibinfo{author}{Richard~O \surnamestart Beil\surnameend}
  (\bibinfo{year}{1994}): \emph{\bibinfo{title}{Do people rely on the
  self-interested maximization of others? An experimental test}}.
\newblock {\sl \bibinfo{journal}{Management Science}}
  \bibinfo{volume}{40}(\bibinfo{number}{2}), pp. \bibinfo{pages}{252--262}.

\bibitemdeclare{book}{berge1957theorie}
\bibitem{berge1957theorie}
\bibinfo{author}{Claude \surnamestart Berge\surnameend} (\bibinfo{year}{1957}):
  \emph{\bibinfo{title}{Th{\'e}orie g{\'e}n{\'e}rale des jeux {\`a} n
  personnes}}.
\newblock \bibinfo{volume}{138}, \bibinfo{publisher}{Gauthier-Villars Paris}.

\bibitemdeclare{book}{bermudez2021frame}
\bibitem{bermudez2021frame}
\bibinfo{author}{Jos{\'e}~Luis \surnamestart Berm{\'u}dez\surnameend}
  (\bibinfo{year}{2021}): \emph{\bibinfo{title}{Frame it Again: New Tools for
  Rational Decision-making}}.
\newblock \bibinfo{publisher}{Cambridge University Press}.

\bibitemdeclare{book}{binmore1994game2}
\bibitem{binmore1994game2}
\bibinfo{author}{Kenneth~G. \surnamestart Binmore\surnameend}
  (\bibinfo{year}{1994}): \emph{\bibinfo{title}{Game theory and the social
  contract: just playing}}.
\newblock \bibinfo{publisher}{M.I.T. Press}.

\bibitemdeclare{book}{binmore1994game}
\bibitem{binmore1994game}
\bibinfo{author}{Kenneth~G. \surnamestart Binmore\surnameend}
  (\bibinfo{year}{1994}): \emph{\bibinfo{title}{Game theory and the social
  contract: playing fair}}.
\newblock \bibinfo{publisher}{M.I.T. Press}.

\bibitemdeclare{book}{binmore2005natural}
\bibitem{binmore2005natural}
\bibinfo{author}{Kenneth~G. \surnamestart Binmore\surnameend}
  (\bibinfo{year}{2005}): \emph{\bibinfo{title}{Natural justice}}.
\newblock \bibinfo{publisher}{Oxford University Press, USA}.

\bibitemdeclare{book}{blum1998complexity}
\bibitem{blum1998complexity}
\bibinfo{author}{Lenore \surnamestart Blum\surnameend}, \bibinfo{author}{Felipe
  \surnamestart Cucker\surnameend}, \bibinfo{author}{Michael \surnamestart
  Shub\surnameend} \& \bibinfo{author}{Steve \surnamestart Smale\surnameend}
  (\bibinfo{year}{1998}): \emph{\bibinfo{title}{Complexity and real
  computation}}.
\newblock \bibinfo{publisher}{Springer Science \& Business Media}.

\bibitemdeclare{article}{bomze2020does}
\bibitem{bomze2020does}
\bibinfo{author}{Immanuel \surnamestart Bomze\surnameend},
  \bibinfo{author}{Werner \surnamestart Schachinger\surnameend} \&
  \bibinfo{author}{J{\"o}rgen \surnamestart Weibull\surnameend}
  (\bibinfo{year}{2020}): \emph{\bibinfo{title}{Does moral play equilibrate?}}
\newblock {\sl \bibinfo{journal}{Economic Theory}}, pp. \bibinfo{pages}{1--11}.

\bibitemdeclare{article}{bomze1997evolution}
\bibitem{bomze1997evolution}
\bibinfo{author}{Immanuel~M \surnamestart Bomze\surnameend}
  (\bibinfo{year}{1997}): \emph{\bibinfo{title}{Evolution towards the maximum
  clique}}.
\newblock {\sl \bibinfo{journal}{Journal of Global Optimization}}
  \bibinfo{volume}{10}(\bibinfo{number}{2}), pp. \bibinfo{pages}{143--164}.

\bibitemdeclare{article}{bomze1998standard}
\bibitem{bomze1998standard}
\bibinfo{author}{Immanuel~M \surnamestart Bomze\surnameend}
  (\bibinfo{year}{1998}): \emph{\bibinfo{title}{On standard quadratic
  optimization problems}}.
\newblock {\sl \bibinfo{journal}{Journal of Global Optimization}}
  \bibinfo{volume}{13}(\bibinfo{number}{4}), pp. \bibinfo{pages}{369--387}.

\bibitemdeclare{article}{bosch2019experiment}
\bibitem{bosch2019experiment}
\bibinfo{author}{Antoni \surnamestart Bosch-Dom{\`e}nech\surnameend} \&
  \bibinfo{author}{Joaquim \surnamestart Silvestre\surnameend}
  (\bibinfo{year}{2019}): \emph{\bibinfo{title}{Experiment-inspired comments on
  {J}ohn {R}oemer’s theory of cooperation}}.
\newblock {\sl \bibinfo{journal}{Review of Social Economy}}
  \bibinfo{volume}{77}(\bibinfo{number}{1}), pp. \bibinfo{pages}{69--89}.

\bibitemdeclare{book}{bowles2013cooperative}
\bibitem{bowles2013cooperative}
\bibinfo{author}{Samuel \surnamestart Bowles\surnameend} \&
  \bibinfo{author}{Herbert \surnamestart Gintis\surnameend}
  (\bibinfo{year}{2013}): \emph{\bibinfo{title}{A cooperative species: Human
  reciprocity and its evolution}}.
\newblock \bibinfo{publisher}{Princeton University Press}.

\bibitemdeclare{article}{braham2020kantian}
\bibitem{braham2020kantian}
\bibinfo{author}{Matthew \surnamestart Braham\surnameend} \&
  \bibinfo{author}{Martin \surnamestart van Hees\surnameend}
  (\bibinfo{year}{2020}): \emph{\bibinfo{title}{Kantian Kantian Optimization}}.
\newblock {\sl \bibinfo{journal}{Erasmus Journal for Philosophy and Economics}}
  \bibinfo{volume}{13}(\bibinfo{number}{2}), pp. \bibinfo{pages}{30--42}.

\bibitemdeclare{inproceedings}{capraro2015translucent}
\bibitem{capraro2015translucent}
\bibinfo{author}{Valerio \surnamestart Capraro\surnameend} \&
  \bibinfo{author}{Joseph~Y \surnamestart Halpern\surnameend}
  (\bibinfo{year}{2015}): \emph{\bibinfo{title}{Translucent players: Explaining
  cooperative behavior in social dilemmas}}.
\newblock In: {\sl \bibinfo{booktitle}{Proceedings of the 15th conference on
  Theoretical Aspects of Rationality and Knowledge}}.

\bibitemdeclare{inproceedings}{chen2016auction}
\bibitem{chen2016auction}
\bibinfo{author}{Jing \surnamestart Chen\surnameend} \& \bibinfo{author}{Silvio
  \surnamestart Micali\surnameend} (\bibinfo{year}{2016}):
  \emph{\bibinfo{title}{Auction revenue in the general spiteful-utility
  model}}.
\newblock In: {\sl \bibinfo{booktitle}{Proceedings of the 2016 ACM Conference
  on Innovations in Theoretical Computer Science}}, pp.
  \bibinfo{pages}{201--211}.

\bibitemdeclare{inproceedings}{chen2008altruism}
\bibitem{chen2008altruism}
\bibinfo{author}{Po-An \surnamestart Chen\surnameend} \& \bibinfo{author}{David
  \surnamestart Kempe\surnameend} (\bibinfo{year}{2008}):
  \emph{\bibinfo{title}{Altruism, selfishness, and spite in traffic routing}}.
\newblock In: {\sl \bibinfo{booktitle}{Proceedings of the 9th ACM conference on
  Electronic commerce}}, pp. \bibinfo{pages}{140--149}.

\bibitemdeclare{article}{cherepanov2013revealed}
\bibitem{cherepanov2013revealed}
\bibinfo{author}{Vadim \surnamestart Cherepanov\surnameend},
  \bibinfo{author}{Tim \surnamestart Feddersen\surnameend} \&
  \bibinfo{author}{Alvaro \surnamestart Sandroni\surnameend}
  (\bibinfo{year}{2013}): \emph{\bibinfo{title}{Revealed preferences and
  aspirations in warm glow theory}}.
\newblock {\sl \bibinfo{journal}{Economic Theory}}
  \bibinfo{volume}{54}(\bibinfo{number}{3}), pp. \bibinfo{pages}{501--535}.

\bibitemdeclare{article}{chevaleyre2006issues}
\bibitem{chevaleyre2006issues}
\bibinfo{author}{Yann \surnamestart Chevaleyre\surnameend},
  \bibinfo{author}{Ulle \surnamestart Endriss\surnameend},
  \bibinfo{author}{J\'{e}r\^{o}me \surnamestart Lang\surnameend},
  \bibinfo{author}{Paul~E \surnamestart Dunne\surnameend},
  \bibinfo{author}{Michel \surnamestart Lemaitre\surnameend},
  \bibinfo{author}{Nicolas \surnamestart Maudet\surnameend},
  \bibinfo{author}{Julian \surnamestart Padget\surnameend},
  \bibinfo{author}{Steve \surnamestart Phelps\surnameend},
  \bibinfo{author}{Juan~A \surnamestart Rodriguez-Aguilar\surnameend} \&
  \bibinfo{author}{Paulo \surnamestart Sousa\surnameend}
  (\bibinfo{year}{2006}): \emph{\bibinfo{title}{Issues in Multiagent Resource
  Allocation}}.
\newblock {\sl \bibinfo{journal}{Informatica}} \bibinfo{volume}{30}.

\bibitemdeclare{article}{colman2018team}
\bibitem{colman2018team}
\bibinfo{author}{Andrew~M \surnamestart Colman\surnameend} \&
  \bibinfo{author}{Natalie \surnamestart Gold\surnameend}
  (\bibinfo{year}{2018}): \emph{\bibinfo{title}{Team reasoning: Solving the
  puzzle of coordination}}.
\newblock {\sl \bibinfo{journal}{Psychonomic Bulletin \& Review}}
  \bibinfo{volume}{25}(\bibinfo{number}{5}), pp. \bibinfo{pages}{1770--1783}.

\bibitemdeclare{article}{colman2011mutual}
\bibitem{colman2011mutual}
\bibinfo{author}{Andrew~M \surnamestart Colman\surnameend},
  \bibinfo{author}{Tom~W \surnamestart K{\"o}rner\surnameend},
  \bibinfo{author}{Olivier \surnamestart Musy\surnameend} \&
  \bibinfo{author}{Tarik \surnamestart Tazda{\"\i}t\surnameend}
  (\bibinfo{year}{2011}): \emph{\bibinfo{title}{Mutual support in games: Some
  properties of Berge equilibria}}.
\newblock {\sl \bibinfo{journal}{Journal of Mathematical Psychology}}
  \bibinfo{volume}{55}(\bibinfo{number}{2}), pp. \bibinfo{pages}{166--175}.

\bibitemdeclare{article}{critch2019parametric}
\bibitem{critch2019parametric}
\bibinfo{author}{Andrew \surnamestart Critch\surnameend}
  (\bibinfo{year}{2019}): \emph{\bibinfo{title}{A parametric, resource-bounded
  generalization of L{\"o}b’s theorem, and a robust cooperation criterion for
  open-source game theory}}.
\newblock {\sl \bibinfo{journal}{The Journal of Symbolic Logic}}, pp.
  \bibinfo{pages}{1--15}.

\bibitemdeclare{book}{dhami2016foundations}
\bibitem{dhami2016foundations}
\bibinfo{author}{Sanjit \surnamestart Dhami\surnameend} (\bibinfo{year}{2016}):
  \emph{\bibinfo{title}{The foundations of behavioral economic analysis}}.
\newblock \bibinfo{publisher}{Oxford University Press}.

\bibitemdeclare{article}{elster2017seeing}
\bibitem{elster2017seeing}
\bibinfo{author}{Jon \surnamestart Elster\surnameend} (\bibinfo{year}{2017}):
  \emph{\bibinfo{title}{On seeing and being seen}}.
\newblock {\sl \bibinfo{journal}{Social choice and welfare}}
  \bibinfo{volume}{49}(\bibinfo{number}{3-4}), pp. \bibinfo{pages}{721--734}.

\bibitemdeclare{article}{fehr1999theory}
\bibitem{fehr1999theory}
\bibinfo{author}{Ernst \surnamestart Fehr\surnameend} \&
  \bibinfo{author}{Klaus~M \surnamestart Schmidt\surnameend}
  (\bibinfo{year}{1999}): \emph{\bibinfo{title}{A theory of fairness,
  competition, and cooperation}}.
\newblock {\sl \bibinfo{journal}{The quarterly journal of economics}}
  \bibinfo{volume}{114}(\bibinfo{number}{3}), pp. \bibinfo{pages}{817--868}.

\bibitemdeclare{article}{fischbacher2001people}
\bibitem{fischbacher2001people}
\bibinfo{author}{Urs \surnamestart Fischbacher\surnameend},
  \bibinfo{author}{Simon \surnamestart G{\"a}chter\surnameend} \&
  \bibinfo{author}{Ernst \surnamestart Fehr\surnameend} (\bibinfo{year}{2001}):
  \emph{\bibinfo{title}{Are people conditionally cooperative? Evidence from a
  public goods experiment}}.
\newblock {\sl \bibinfo{journal}{Economics letters}}
  \bibinfo{volume}{71}(\bibinfo{number}{3}), pp. \bibinfo{pages}{397--404}.

\bibitemdeclare{inproceedings}{fortnow2009program}
\bibitem{fortnow2009program}
\bibinfo{author}{Lance \surnamestart Fortnow\surnameend}
  (\bibinfo{year}{2009}): \emph{\bibinfo{title}{Program equilibria and
  discounted computation time}}.
\newblock In: {\sl \bibinfo{booktitle}{Proceedings of the 12th Conference on
  Theoretical Aspects of Rationality and Knowledge}}, pp.
  \bibinfo{pages}{128--133}.

\bibitemdeclare{article}{fourny2020perfect}
\bibitem{fourny2020perfect}
\bibinfo{author}{Ghislain \surnamestart Fourny\surnameend}
  (\bibinfo{year}{2020}): \emph{\bibinfo{title}{Perfect Prediction in normal
  form: Superrational thinking extended to non-symmetric games}}.
\newblock {\sl \bibinfo{journal}{Journal of Mathematical Psychology}}
  \bibinfo{volume}{96}, p. \bibinfo{pages}{102332}.

\bibitemdeclare{book}{frank2010price}
\bibitem{frank2010price}
\bibinfo{author}{Robert~H \surnamestart Frank\surnameend}
  (\bibinfo{year}{2004}): \emph{\bibinfo{title}{What Price the Moral High
  Ground? Ethical Dilemmas in Competitive Environments}}.
\newblock \bibinfo{publisher}{Princeton University Press}.

\bibitemdeclare{article}{gilboa1989nash}
\bibitem{gilboa1989nash}
\bibinfo{author}{Itzhak \surnamestart Gilboa\surnameend} \&
  \bibinfo{author}{Eitan \surnamestart Zemel\surnameend}
  (\bibinfo{year}{1989}): \emph{\bibinfo{title}{Nash and correlated equilibria:
  Some complexity considerations}}.
\newblock {\sl \bibinfo{journal}{Games and Economic Behavior}}
  \bibinfo{volume}{1}(\bibinfo{number}{1}), pp. \bibinfo{pages}{80--93}.

\bibitemdeclare{book}{gintis2009bounds}
\bibitem{gintis2009bounds}
\bibinfo{author}{Herbert \surnamestart Gintis\surnameend}
  (\bibinfo{year}{2009}): \emph{\bibinfo{title}{The bounds of reason: game
  theory and the unification of the behavioral sciences}}.
\newblock \bibinfo{publisher}{Princeton University Press}.

\bibitemdeclare{article}{gintis2010social}
\bibitem{gintis2010social}
\bibinfo{author}{Herbert \surnamestart Gintis\surnameend}
  (\bibinfo{year}{2010}): \emph{\bibinfo{title}{Social norms as choreography}}.
\newblock {\sl \bibinfo{journal}{Politics, Philosophy \& Economics}}
  \bibinfo{volume}{9}(\bibinfo{number}{3}), pp. \bibinfo{pages}{251--264}.

\bibitemdeclare{book}{gintis2016individuality}
\bibitem{gintis2016individuality}
\bibinfo{author}{Herbert \surnamestart Gintis\surnameend}
  (\bibinfo{year}{2016}): \emph{\bibinfo{title}{Individuality and entanglement:
  the moral and material bases of social life}}.
\newblock \bibinfo{publisher}{Princeton University Press}.

\bibitemdeclare{inproceedings}{gintis2014typology}
\bibitem{gintis2014typology}
\bibinfo{author}{Herbert \surnamestart Gintis\surnameend}
  (\bibinfo{year}{2016}): \emph{\bibinfo{title}{A Typology of Human Morality}}.
\newblock In \bibinfo{editor}{David~S. \surnamestart Wilson\surnameend} \&
  \bibinfo{editor}{Alan \surnamestart Kirman\surnameend}, editors: {\sl
  \bibinfo{booktitle}{Complexity and Evolution: Towards a New Synthesis for
  Economics}}, \bibinfo{publisher}{M.I.T. Press}.

\bibitemdeclare{article}{gold2020team}
\bibitem{gold2020team}
\bibinfo{author}{Natalie \surnamestart Gold\surnameend} \&
  \bibinfo{author}{Andrew~M \surnamestart Colman\surnameend}
  (\bibinfo{year}{2020}): \emph{\bibinfo{title}{Team reasoning and the rational
  choice of payoff-dominant outcomes in games}}.
\newblock {\sl \bibinfo{journal}{Topoi}}
  \bibinfo{volume}{39}(\bibinfo{number}{2}), pp. \bibinfo{pages}{305--316}.

\bibitemdeclare{article}{grossi2012dependence}
\bibitem{grossi2012dependence}
\bibinfo{author}{Davide \surnamestart Grossi\surnameend} \&
  \bibinfo{author}{Paolo \surnamestart Turrini\surnameend}
  (\bibinfo{year}{2012}): \emph{\bibinfo{title}{Dependence in games and
  dependence games}}.
\newblock {\sl \bibinfo{journal}{Autonomous Agents and Multi-Agent Systems}}
  \bibinfo{volume}{25}(\bibinfo{number}{2}), pp. \bibinfo{pages}{284--312}.

\bibitemdeclare{article}{halpern2018game}
\bibitem{halpern2018game}
\bibinfo{author}{Joseph~Y \surnamestart Halpern\surnameend} \&
  \bibinfo{author}{Rafael \surnamestart Pass\surnameend}
  (\bibinfo{year}{2018}): \emph{\bibinfo{title}{Game theory with translucent
  players}}.
\newblock {\sl \bibinfo{journal}{International Journal of Game Theory}}
  \bibinfo{volume}{47}(\bibinfo{number}{3}), pp. \bibinfo{pages}{949--976}.

\bibitemdeclare{inproceedings}{halpern2010cooperative}
\bibitem{halpern2010cooperative}
\bibinfo{author}{Joseph~Y \surnamestart Halpern\surnameend} \&
  \bibinfo{author}{Nan \surnamestart Rong\surnameend} (\bibinfo{year}{2010}):
  \emph{\bibinfo{title}{Cooperative equilibrium}}.
\newblock In: {\sl \bibinfo{booktitle}{Proceedings of the 9th International
  Conference on Autonomous Agents and Multiagent Systems: volume 1-Volume 1}},
  pp. \bibinfo{pages}{1465--1466}.

\bibitemdeclare{article}{ham2013notions}
\bibitem{ham2013notions}
\bibinfo{author}{Nicholas \surnamestart Ham\surnameend} (\bibinfo{year}{2013}):
  \emph{\bibinfo{title}{Notions of Symmetry for Finite Strategic-Form Games}}.
\newblock {\sl \bibinfo{journal}{arXiv preprint arXiv:1311.4766}}.

\bibitemdeclare{article}{harsanyi1955cardinal}
\bibitem{harsanyi1955cardinal}
\bibinfo{author}{John~C \surnamestart Harsanyi\surnameend}
  (\bibinfo{year}{1955}): \emph{\bibinfo{title}{Cardinal welfare,
  individualistic ethics, and interpersonal comparisons of utility}}.
\newblock {\sl \bibinfo{journal}{Journal of political economy}}
  \bibinfo{volume}{63}(\bibinfo{number}{4}), pp. \bibinfo{pages}{309--321}.

\bibitemdeclare{article}{harsanyi1977rule}
\bibitem{harsanyi1977rule}
\bibinfo{author}{John~C \surnamestart Harsanyi\surnameend}
  (\bibinfo{year}{1977}): \emph{\bibinfo{title}{Rule utilitarianism and
  decision theory}}.
\newblock {\sl \bibinfo{journal}{Erkenntnis}}
  \bibinfo{volume}{11}(\bibinfo{number}{1}), pp. \bibinfo{pages}{25--53}.

\bibitemdeclare{article}{hoefer2013altruism}
\bibitem{hoefer2013altruism}
\bibinfo{author}{Martin \surnamestart Hoefer\surnameend} \&
  \bibinfo{author}{Alexander \surnamestart Skopalik\surnameend}
  (\bibinfo{year}{2013}): \emph{\bibinfo{title}{Altruism in atomic congestion
  games}}.
\newblock {\sl \bibinfo{journal}{ACM Transactions on Economics and Computation
  (TEAC)}} \bibinfo{volume}{1}(\bibinfo{number}{4}), pp.
  \bibinfo{pages}{1--21}.

\bibitemdeclare{article}{van2013program}
\bibitem{van2013program}
\bibinfo{author}{Wiebe \surnamestart van~der Hoek\surnameend},
  \bibinfo{author}{Cees \surnamestart Witteveen\surnameend} \&
  \bibinfo{author}{Michael \surnamestart Wooldridge\surnameend}
  (\bibinfo{year}{2013}): \emph{\bibinfo{title}{Program equilibrium—a program
  reasoning approach}}.
\newblock {\sl \bibinfo{journal}{International Journal of Game Theory}}
  \bibinfo{volume}{42}(\bibinfo{number}{3}), pp. \bibinfo{pages}{639--671}.

\bibitemdeclare{inproceedings}{superrational}
\bibitem{superrational}
\bibinfo{author}{Douglas \surnamestart Hofstadter\surnameend}
  (\bibinfo{year}{1985}): \emph{\bibinfo{title}{Dilemmas for Superrational
  Thinkers, Leading up to a Luring Lottery}}.
\newblock In: {\sl \bibinfo{booktitle}{Metamagical Themas: Questing for the
  Essence of Mind and Pattern}}, \bibinfo{publisher}{Basic Books}.

\bibitemdeclare{article}{howard1988cooperation}
\bibitem{howard1988cooperation}
\bibinfo{author}{John~V \surnamestart Howard\surnameend}
  (\bibinfo{year}{1988}): \emph{\bibinfo{title}{Cooperation in the Prisoner's
  Dilemma}}.
\newblock {\sl \bibinfo{journal}{Theory and Decision}}
  \bibinfo{volume}{24}(\bibinfo{number}{3}), p. \bibinfo{pages}{203}.

\bibitemdeclare{article}{jiang2015polynomial}
\bibitem{jiang2015polynomial}
\bibinfo{author}{Albert~Xin \surnamestart Jiang\surnameend} \&
  \bibinfo{author}{Kevin \surnamestart Leyton-Brown\surnameend}
  (\bibinfo{year}{2015}): \emph{\bibinfo{title}{Polynomial-time computation of
  exact correlated equilibrium in compact games}}.
\newblock {\sl \bibinfo{journal}{Games and Economic Behavior}}
  \bibinfo{volume}{91}, pp. \bibinfo{pages}{347--359}.

\bibitemdeclare{article}{kalai2010commitment}
\bibitem{kalai2010commitment}
\bibinfo{author}{Adam~Tauman \surnamestart Kalai\surnameend},
  \bibinfo{author}{Ehud \surnamestart Kalai\surnameend}, \bibinfo{author}{Ehud
  \surnamestart Lehrer\surnameend} \& \bibinfo{author}{Dov \surnamestart
  Samet\surnameend} (\bibinfo{year}{2010}): \emph{\bibinfo{title}{A commitment
  folk theorem}}.
\newblock {\sl \bibinfo{journal}{Games and Economic Behavior}}
  \bibinfo{volume}{69}(\bibinfo{number}{1}), pp. \bibinfo{pages}{127--137}.

\bibitemdeclare{inproceedings}{kordonis2016model}
\bibitem{kordonis2016model}
\bibinfo{author}{Ioannis \surnamestart Kordonis\surnameend}
  (\bibinfo{year}{2020}): \emph{\bibinfo{title}{A Model for Partial Kantian
  Cooperation}}.
\newblock In: {\sl \bibinfo{booktitle}{Advances in Dynamic Games}},
  \bibinfo{publisher}{Springer}, pp. \bibinfo{pages}{317--346}.

\bibitemdeclare{article}{laffont1975macroeconomic}
\bibitem{laffont1975macroeconomic}
\bibinfo{author}{Jean-Jacques \surnamestart Laffont\surnameend}
  (\bibinfo{year}{1975}): \emph{\bibinfo{title}{Macroeconomic constraints,
  economic efficiency and ethics: An introduction to Kantian economics}}.
\newblock {\sl \bibinfo{journal}{Economica}}
  \bibinfo{volume}{42}(\bibinfo{number}{168}), pp. \bibinfo{pages}{430--437}.

\bibitemdeclare{inproceedings}{lavictoire2014program}
\bibitem{lavictoire2014program}
\bibinfo{author}{Patrick \surnamestart LaVictoire\surnameend},
  \bibinfo{author}{Benja \surnamestart Fallenstein\surnameend},
  \bibinfo{author}{Eliezer \surnamestart Yudkowsky\surnameend},
  \bibinfo{author}{Mihaly \surnamestart Barasz\surnameend},
  \bibinfo{author}{Paul \surnamestart Christiano\surnameend} \&
  \bibinfo{author}{Marcello \surnamestart Herreshoff\surnameend}
  (\bibinfo{year}{2014}): \emph{\bibinfo{title}{Program equilibrium in the
  {P}risoner's {D}ilemma via L{\"o}b's theorem}}.
\newblock In: {\sl \bibinfo{booktitle}{Workshops at the Twenty-Eighth AAAI
  Conference on Artificial Intelligence}}.

\bibitemdeclare{article}{levine2020logic}
\bibitem{levine2020logic}
\bibinfo{author}{Sydney \surnamestart Levine\surnameend}, \bibinfo{author}{Max
  \surnamestart Kleiman-Weiner\surnameend}, \bibinfo{author}{Laura
  \surnamestart Schulz\surnameend}, \bibinfo{author}{Joshua \surnamestart
  Tenenbaum\surnameend} \& \bibinfo{author}{Fiery \surnamestart
  Cushman\surnameend} (\bibinfo{year}{2020}): \emph{\bibinfo{title}{The logic
  of universalization guides moral judgment}}.
\newblock {\sl \bibinfo{journal}{Proceedings of the National Academy of
  Sciences}} \bibinfo{volume}{117}(\bibinfo{number}{42}), pp.
  \bibinfo{pages}{26158--26169}.

\bibitemdeclare{article}{liu2011controllability}
\bibitem{liu2011controllability}
\bibinfo{author}{Yang-Yu \surnamestart Liu\surnameend},
  \bibinfo{author}{Jean-Jacques \surnamestart Slotine\surnameend} \&
  \bibinfo{author}{Albert-L{\'a}szl{\'o} \surnamestart Barab{\'a}si\surnameend}
  (\bibinfo{year}{2011}): \emph{\bibinfo{title}{Controllability of complex
  networks}}.
\newblock {\sl \bibinfo{journal}{Nature}}
  \bibinfo{volume}{473}(\bibinfo{number}{7346}), pp. \bibinfo{pages}{167--173}.

\bibitemdeclare{article}{monderer2009strong}
\bibitem{monderer2009strong}
\bibinfo{author}{Dov \surnamestart Monderer\surnameend} \&
  \bibinfo{author}{Moshe \surnamestart Tennenholtz\surnameend}
  (\bibinfo{year}{2009}): \emph{\bibinfo{title}{Strong mediated equilibrium}}.
\newblock {\sl \bibinfo{journal}{Artificial Intelligence}}
  \bibinfo{volume}{173}(\bibinfo{number}{1}), pp. \bibinfo{pages}{180--195}.

\bibitemdeclare{article}{motzkin1965maxima}
\bibitem{motzkin1965maxima}
\bibinfo{author}{Theodore~S \surnamestart Motzkin\surnameend} \&
  \bibinfo{author}{Ernst~G \surnamestart Straus\surnameend}
  (\bibinfo{year}{1965}): \emph{\bibinfo{title}{Maxima for graphs and a new
  proof of a theorem of Tur{\'a}n}}.
\newblock {\sl \bibinfo{journal}{Canadian Journal of Mathematics}}
  \bibinfo{volume}{17}, pp. \bibinfo{pages}{533--540}.

\bibitemdeclare{article}{nahhas2018computational}
\bibitem{nahhas2018computational}
\bibinfo{author}{Ahmad \surnamestart Nahhas\surnameend} \&
  \bibinfo{author}{HW~\surnamestart Corley\surnameend} (\bibinfo{year}{2018}):
  \emph{\bibinfo{title}{The Computational Complexity of Finding a Mixed Berge
  Equilibrium for a $k$-Person Noncooperative Game in Normal Form}}.
\newblock {\sl \bibinfo{journal}{International Game Theory Review}}
  \bibinfo{volume}{20}(\bibinfo{number}{04}), p. \bibinfo{pages}{1850010}.

\bibitemdeclare{article}{nash1951non}
\bibitem{nash1951non}
\bibinfo{author}{John \surnamestart Nash\surnameend} (\bibinfo{year}{1951}):
  \emph{\bibinfo{title}{Non-cooperative games}}.
\newblock {\sl \bibinfo{journal}{Annals of mathematics}}, pp.
  \bibinfo{pages}{286--295}.

\bibitemdeclare{article}{oesterheld2019robust}
\bibitem{oesterheld2019robust}
\bibinfo{author}{Caspar \surnamestart Oesterheld\surnameend}
  (\bibinfo{year}{2019}): \emph{\bibinfo{title}{Robust program equilibrium}}.
\newblock {\sl \bibinfo{journal}{Theory and Decision}}
  \bibinfo{volume}{86}(\bibinfo{number}{1}), pp. \bibinfo{pages}{143--159}.

\bibitemdeclare{book}{osborne:rubinstein:book}
\bibitem{osborne:rubinstein:book}
\bibinfo{author}{Martin \surnamestart Osborne\surnameend} \&
  \bibinfo{author}{Ariel \surnamestart Rubinstein\surnameend}
  (\bibinfo{year}{1994}): \emph{\bibinfo{title}{A Course in Game Theory}}.
\newblock \bibinfo{publisher}{M.I.T. Press}.

\bibitemdeclare{article}{papadimitriou2008computing}
\bibitem{papadimitriou2008computing}
\bibinfo{author}{Christos~H \surnamestart Papadimitriou\surnameend} \&
  \bibinfo{author}{Tim \surnamestart Roughgarden\surnameend}
  (\bibinfo{year}{2008}): \emph{\bibinfo{title}{Computing correlated equilibria
  in multi-player games}}.
\newblock {\sl \bibinfo{journal}{Journal of the ACM (JACM)}}
  \bibinfo{volume}{55}(\bibinfo{number}{3}), pp. \bibinfo{pages}{1--29}.

\bibitemdeclare{article}{pykacz2019example}
\bibitem{pykacz2019example}
\bibinfo{author}{Jaros{\l}aw \surnamestart Pykacz\surnameend},
  \bibinfo{author}{Pawe{\l} \surnamestart Bytner\surnameend} \&
  \bibinfo{author}{Piotr \surnamestart Fr{\k{a}}ckiewicz\surnameend}
  (\bibinfo{year}{2019}): \emph{\bibinfo{title}{Example of a finite game with
  no Berge equilibria at all}}.
\newblock {\sl \bibinfo{journal}{Games}}
  \bibinfo{volume}{10}(\bibinfo{number}{1}), p.~\bibinfo{pages}{7}.

\bibitemdeclare{book}{rand1964virtue}
\bibitem{rand1964virtue}
\bibinfo{author}{Ayn \surnamestart Rand\surnameend} (\bibinfo{year}{1964}):
  \emph{\bibinfo{title}{The virtue of selfishness}}.
\newblock \bibinfo{publisher}{Penguin}.

\bibitemdeclare{book}{rawls2001justice}
\bibitem{rawls2001justice}
\bibinfo{author}{John \surnamestart Rawls\surnameend} (\bibinfo{year}{2001}):
  \emph{\bibinfo{title}{Justice as fairness: A restatement}}.
\newblock \bibinfo{publisher}{Harvard University Press}.

\bibitemdeclare{article}{roemer2010kantian}
\bibitem{roemer2010kantian}
\bibinfo{author}{John~E \surnamestart Roemer\surnameend}
  (\bibinfo{year}{2010}): \emph{\bibinfo{title}{Kantian equilibrium}}.
\newblock {\sl \bibinfo{journal}{Scandinavian Journal of Economics}}
  \bibinfo{volume}{112}(\bibinfo{number}{1}), pp. \bibinfo{pages}{1--24}.

\bibitemdeclare{article}{roemer2015kantian}
\bibitem{roemer2015kantian}
\bibinfo{author}{John~E \surnamestart Roemer\surnameend}
  (\bibinfo{year}{2015}): \emph{\bibinfo{title}{Kantian optimization: A
  microfoundation for cooperation}}.
\newblock {\sl \bibinfo{journal}{Journal of Public Economics}}
  \bibinfo{volume}{127}, pp. \bibinfo{pages}{45--57}.

\bibitemdeclare{book}{roemer2019we}
\bibitem{roemer2019we}
\bibinfo{author}{John~E \surnamestart Roemer\surnameend}
  (\bibinfo{year}{2019}): \emph{\bibinfo{title}{How We Cooperate: A Theory of
  Kantian Optimization}}.
\newblock \bibinfo{publisher}{Yale University Press}.

\bibitemdeclare{inproceedings}{rong2013towards}
\bibitem{rong2013towards}
\bibinfo{author}{Nan \surnamestart Rong\surnameend} \&
  \bibinfo{author}{Joseph~Y \surnamestart Halpern\surnameend}
  (\bibinfo{year}{2013}): \emph{\bibinfo{title}{Towards a deeper understanding
  of cooperative equilibrium: characterization and complexity}}.
\newblock In: {\sl \bibinfo{booktitle}{Proceedings of the 2013 international
  conference on Autonomous agents and multi-agent systems}}, pp.
  \bibinfo{pages}{319--326}.

\bibitemdeclare{book}{van2019cognition}
\bibitem{van2019cognition}
\bibinfo{author}{Iris \surnamestart van Rooij\surnameend},
  \bibinfo{author}{Mark \surnamestart Blokpoel\surnameend},
  \bibinfo{author}{Johan \surnamestart Kwisthout\surnameend} \&
  \bibinfo{author}{Todd \surnamestart Wareham\surnameend}
  (\bibinfo{year}{2019}): \emph{\bibinfo{title}{Cognition and intractability: A
  guide to classical and parameterized complexity analysis}}.
\newblock \bibinfo{publisher}{Cambridge University Press}.

\bibitemdeclare{book}{selfish-routing}
\bibitem{selfish-routing}
\bibinfo{author}{Tim \surnamestart Roughgarden\surnameend}
  (\bibinfo{year}{2005}): \emph{\bibinfo{title}{Selfish Routing and the Price
  of Anarchy}}.
\newblock \bibinfo{publisher}{M.I.T. Press}.

\bibitemdeclare{book}{sedgwick2008kant}
\bibitem{sedgwick2008kant}
\bibinfo{author}{Sally \surnamestart Sedgwick\surnameend}
  (\bibinfo{year}{2008}): \emph{\bibinfo{title}{Kant's groundwork of the
  metaphysics of morals: an introduction}}.
\newblock \bibinfo{publisher}{Cambridge University Press}.

\bibitemdeclare{inproceedings}{selten1982equilibrium}
\bibitem{selten1982equilibrium}
\bibinfo{author}{Reinhard \surnamestart Selten\surnameend} \&
  \bibinfo{author}{Werner \surnamestart G{\"u}th\surnameend}
  (\bibinfo{year}{1982}): \emph{\bibinfo{title}{Equilibrium point selection in
  a class of market entry games}}.
\newblock In: {\sl \bibinfo{booktitle}{Games, economic dynamics, and time
  series analysis}}, \bibinfo{organization}{Springer}, pp.
  \bibinfo{pages}{101--116}.

\bibitemdeclare{article}{sher2020normative}
\bibitem{sher2020normative}
\bibinfo{author}{Itai \surnamestart Sher\surnameend} (\bibinfo{year}{2020}):
  \emph{\bibinfo{title}{Normative Aspects of Kantian Equilibrium}}.
\newblock {\sl \bibinfo{journal}{Erasmus Journal for Philosophy and Economics}}
  \bibinfo{volume}{13}(\bibinfo{number}{2}), pp. \bibinfo{pages}{43--84}.

\bibitemdeclare{book}{shoham2009multiagent}
\bibitem{shoham2009multiagent}
\bibinfo{author}{Yoav \surnamestart Shoham\surnameend} \&
  \bibinfo{author}{Kevin \surnamestart Leyton-Brown\surnameend}
  (\bibinfo{year}{2009}): \emph{\bibinfo{title}{Multiagent systems:
  Algorithmic, game-theoretic, and logical foundations}}.
\newblock \bibinfo{publisher}{Cambridge University Press}.

\bibitemdeclare{inproceedings}{sichman2002multi}
\bibitem{sichman2002multi}
\bibinfo{author}{Jaime~Sim{\~a}o \surnamestart Sichman\surnameend} \&
  \bibinfo{author}{Rosaria \surnamestart Conte\surnameend}
  (\bibinfo{year}{2002}): \emph{\bibinfo{title}{Multi-agent dependence by
  dependence graphs}}.
\newblock In: {\sl \bibinfo{booktitle}{Proceedings of the first international
  joint conference on Autonomous agents and multiagent systems: part 1}}, pp.
  \bibinfo{pages}{483--490}.

\bibitemdeclare{book}{simon1997models}
\bibitem{simon1997models}
\bibinfo{author}{Herbert~Alexander \surnamestart Simon\surnameend}
  (\bibinfo{year}{1997}): \emph{\bibinfo{title}{Models of bounded rationality:
  Empirically grounded economic reason}}.
\newblock \bibinfo{volume}{3}, \bibinfo{publisher}{MIT press}.

\bibitemdeclare{phdthesis}{stein2011exchangeable}
\bibitem{stein2011exchangeable}
\bibinfo{author}{Noah~D \surnamestart Stein\surnameend} (\bibinfo{year}{2011}):
  \emph{\bibinfo{title}{Exchangeable Equilibria}}.
\newblock Ph.D. thesis, \bibinfo{school}{Massachusetts Institute of
  Technology}.

\bibitemdeclare{article}{sugden2003logic}
\bibitem{sugden2003logic}
\bibinfo{author}{Robert \surnamestart Sugden\surnameend}
  (\bibinfo{year}{2003}): \emph{\bibinfo{title}{The logic of team reasoning}}.
\newblock {\sl \bibinfo{journal}{Philosophical explorations}}
  \bibinfo{volume}{6}(\bibinfo{number}{3}), pp. \bibinfo{pages}{165--181}.

\bibitemdeclare{inproceedings}{talbott1998we}
\bibitem{talbott1998we}
\bibinfo{author}{William~J \surnamestart Talbott\surnameend}
  (\bibinfo{year}{1998}): \emph{\bibinfo{title}{Why We Need a Moral Equilibrium
  Theory}}.
\newblock In \bibinfo{editor}{P.~\surnamestart Danielson\surnameend}, editor:
  {\sl \bibinfo{booktitle}{Modeling Rationality, Morality and Evolution}},
  \bibinfo{publisher}{Oxford University Press}.

\bibitemdeclare{article}{tennenholtz2004program}
\bibitem{tennenholtz2004program}
\bibinfo{author}{Moshe \surnamestart Tennenholtz\surnameend}
  (\bibinfo{year}{2004}): \emph{\bibinfo{title}{Program equilibrium}}.
\newblock {\sl \bibinfo{journal}{Games and Economic Behavior}}
  \bibinfo{volume}{49}(\bibinfo{number}{2}), pp. \bibinfo{pages}{363--373}.

\bibitemdeclare{article}{tohme2019structural}
\bibitem{tohme2019structural}
\bibinfo{author}{Fernando~A \surnamestart Tohm{\'e}\surnameend} \&
  \bibinfo{author}{Ignacio~D \surnamestart Viglizzo\surnameend}
  (\bibinfo{year}{2019}): \emph{\bibinfo{title}{Structural relations of
  symmetry among players in strategic games}}.
\newblock {\sl \bibinfo{journal}{International Journal of General Systems}}
  \bibinfo{volume}{48}(\bibinfo{number}{4}), pp. \bibinfo{pages}{443--461}.

\bibitemdeclare{article}{tohme2019superrational}
\bibitem{tohme2019superrational}
\bibinfo{author}{Fernando~A \surnamestart Tohm{\'e}\surnameend} \&
  \bibinfo{author}{Ignacio~D \surnamestart Viglizzo\surnameend}
  (\bibinfo{year}{2019}): \emph{\bibinfo{title}{Superrational types}}.
\newblock {\sl \bibinfo{journal}{Logic Journal of the IGPL}}
  \bibinfo{volume}{27}(\bibinfo{number}{6}), pp. \bibinfo{pages}{847--864}.

\bibitemdeclare{book}{tomasello2009we}
\bibitem{tomasello2009we}
\bibinfo{author}{Michael \surnamestart Tomasello\surnameend}
  (\bibinfo{year}{2009}): \emph{\bibinfo{title}{Why we cooperate}}.
\newblock \bibinfo{publisher}{MIT press}.

\bibitemdeclare{book}{tomasello2016natural}
\bibitem{tomasello2016natural}
\bibinfo{author}{Michael \surnamestart Tomasello\surnameend}
  (\bibinfo{year}{2016}): \emph{\bibinfo{title}{A natural history of human
  morality}}.
\newblock \bibinfo{publisher}{Harvard University Press}.

\bibitemdeclare{article}{zhukovskii1985some}
\bibitem{zhukovskii1985some}
\bibinfo{author}{Vladislav~I \surnamestart Zhukovskii\surnameend}
  (\bibinfo{year}{1985}): \emph{\bibinfo{title}{Some problems of
  non-antagonistic differential games}}.
\newblock {\sl \bibinfo{journal}{Matematiceskie metody v issledovanii
  operacij}}, pp. \bibinfo{pages}{103--195}.

\bibitemdeclare{article}{zhukovskiy2017mathematical}
\bibitem{zhukovskiy2017mathematical}
\bibinfo{author}{Vladislav~I \surnamestart Zhukovskiy\surnameend} \&
  \bibinfo{author}{Konstantin~N \surnamestart Kudryavtsev\surnameend}
  (\bibinfo{year}{2017}): \emph{\bibinfo{title}{Mathematical foundations of the
  Golden Rule. I. Static case}}.
\newblock {\sl \bibinfo{journal}{Automation and Remote Control}}
  \bibinfo{volume}{78}(\bibinfo{number}{10}), pp. \bibinfo{pages}{1920--1940}.

\end{thebibliography}
\end{document}